\documentclass[11pt]{article}

\usepackage[english]{babel}
\usepackage[utf8x]{inputenc}
\usepackage{libertine}
\usepackage{amsmath}
\usepackage{amsthm, amssymb}
\usepackage{amsfonts}
\usepackage{algorithm}
\usepackage{algorithmic}
\usepackage{graphicx}
\usepackage[dvipsnames]{color}
\usepackage{fullpage}
\usepackage{bbm}
\usepackage[
  pagebackref,
  colorlinks=true,
  urlcolor=blue,
  linkcolor=blue,
  citecolor=blue,
]{hyperref}
\usepackage[nameinlink]{cleveref}

\newtheorem{theorem}{Theorem}

\newtheorem{definition}[theorem]{Definition}
\newtheorem{lemma}[theorem]{Lemma}
\newtheorem{corollary}[theorem]{Corollary}

\newtheorem{example}{Example}

\newcommand{\eps}{\varepsilon}
\newcommand{\E}{\mathbb{E}}

\newcommand{\allocsub}[1]{\mathbb{A}_{#1}}
\newcommand{\allocsubi}{\allocsub{i}}

\newcommand{\inspect}[1]{\mathbb{I}_{#1}}
\newcommand{\inspecti}{\inspect{i}}

\newcommand{\unopened}{\mathcal{U}}
\newcommand{\M}{\mathcal{M}}
\newcommand{\A}{\mathcal{A}}
\newcommand{\W}{\mathcal{W}}
\newcommand{\Pol}[1]{\mathcal{P}^{#1}}
\newcommand{\Poli}{\Pol{i}}
\newcommand{\Pols}{\Pol{S}}
\newcommand{\strike}{\sigma}
\newcommand{\covered}{\kappa}
\newcommand{\R}{\mathbb{R}}

\newcommand{\Prob}{\mathbb{P}}
\newcommand{\indic}[1]{{\mathbf{1}_{#1}}}
\newcommand{\NonAdapt}{\mbox{\sc NonAdapt}}

\title{Pandora's Problem with Nonobligatory Inspection}
\author{Hedyeh Beyhaghi and Robert Kleinberg\thanks{Supported in part by
NSF grant CCF-1512964.} \\ {\em Cornell University}}
\date{}
\begin{document}

\maketitle

\begin{abstract}
  Martin Weitzman's ``Pandora's problem'' furnishes the
  mathematical basis for optimal search theory in economics.
  Nearly 40 years later, Laura Doval introduced a version
  of the problem in which
  the searcher is not obligated to pay the cost of inspecting
  an alternative's value before selecting it.
  Unlike the original Pandora's problem,
  the version with nonobligatory inspection
  cannot be solved optimally by any simple ranking-based policy,
  and it is unknown whether there exists any polynomial-time
  algorithm to compute the optimal policy. This motivates
  the study of approximately optimal policies that are
  simple and computationally efficient. In this work we
  provide the first non-trivial approximation guarantees
  for this problem. We introduce a family of
  ``committing policies'' such that it is computationally
  easy to find and implement the optimal committing policy.
  We prove that the optimal committing policy is guaranteed
  to approximate the fully optimal policy within a
  $1-\frac1e = 0.63\ldots$ factor, and for the special
  case of two boxes we improve this factor to $4/5$ and
  show that this approximation is tight for the class of
  committing policies.
\end{abstract}

\maketitle

\section{Introduction}
\label{sec:intro}

Search theory, which concerns the ways in which costs of
obtaining information affect the structure and
outcome of optimization procedures, was born
in 1961 when the economist
George Stigler~\cite{stigler} sought to understand
the phenomenon of price dispersion.
When sellers charge different prices for
identical goods, why do consumers ever
choose the higher-priced seller? Stigler
realized that this counter-intuitive
behavior could be explained by {\em search
frictions} whereby consumers must expend
costly effort to find and/or evaluate sellers.
%

The insight that optimization has qualitatively
different outcomes under search frictions
resounded beyond economics, and particularly
within computer science. Models of
costly information acquisition have been
incorporated into information
retrieval, robotics, database theory,
distributed systems, and of
course also into sub-areas of CS
such as algorithmic pricing and mechanism design
that explicitly relate to economics.

From a mathematical standpoint, the most
foundational model of optimal search
was articulated by Martin Weitzman~\cite{weitzman}
under the name {\em Pandora's
problem}. 
The basic elements of the problem are as
follows. A searcher is allowed to select
a prize from one of $n$ closed boxes.
The values of the prizes inside the boxes
are independent random variables, sampled
from (not necessarily identical) distributions
that are known to the searcher.
The searcher chooses a sequence of operations,
each of which
is either {\em opening} a box or
{\em selecting} the box. Opening box $i$
has an associated cost $c_i$ and results
in learning the value $v_i$ of the prize
contained inside. Selecting box $i$
results in a payoff of $v_i$ and immediately
ends the search process; this operation can
only be performed after box $i$ has been opened.
The searcher's goal
is to design an adaptive policy
(i.e., a choice of which operation to
perform next, for every possible past
history of operations and their outcomes)
to maximize the expectation of the prize
selected, minus the sum of the inspection
costs accrued while opening boxes.

A priori, it would appear
that the solution to Pandora's problem may
be horribly complex. An optimal adaptive
policy must specify the next operation to
be performed given any past history. If
each $v_i$ is drawn from a distribution
with support size $s$, the number of
possible histories is $s^n$,
so adaptive policies in general have
exponential description size. It is
easy to see that the optimal policy
can be implemented in space
$\operatorname{poly}(n,s)$,
but there is no obvious reason why
the complexity of Pandora's problem
should lie anywhere below PSPACE.

Surprisingly, though, the solution to
Pandora's problem is not complex at all.
Weitzman proved that the
optimal policy
has a beautifully
simple structure: one computes a
{\em reservation value} for each box, sorts
them in decreasing order of reservation value,
and opens them in this order, stopping
and selecting the first open box
whose prize value exceeds the reservation
value of every remaining closed box.
This entire process can be implemented
to run in time $O(ns \log(ns))$.

A key assumption in Pandora's problem
is that the searcher must open
a box, and suffer the attendant cost,
before selecting it. This assumption
limits the applicability of the model.
Given that the value of the prize inside a box
is drawn from a distribution known in
advance, in many cases it may be more
advantageous to select a box without
paying to inspect its contents. For
example, when using Pandora's problem
to model a firm searching for
an employee to hire, boxes represent
job candidates. The cost of opening
a box represents the cost to the firm of undertaking
a process, such as an interview or internship,
to assess the value of hiring a candidate.
If the evidence of a candidate's promise is sufficiently
strong {\em a priori}, it may be realistic to
assume that the firm is willing to hire him or her
directly, skipping the costly evaluation
process. This motivates a version of Pandora's
problem in which a box may be selected without
opening it, if the searcher so desires.

Given that Weitzman's original model dates
from 1979 and has been cited almost 900 times,
it is quite surprising that this close variant
never appeared in the literature until a 2018 paper
by Laura Doval~\cite{doval}.
The relative unpopularity of the variant with
nonobligatory inspection can probably be attributed
to the apparent complexity and lack of structure
in its optimal solution. For example, Doval
presents an example of a problem instance
(Problem 3 in~\cite{doval})
with three boxes --- A, B, and C --- such that
the optimal policy first opens box C, but the
question of whether it subsequently
opens box A before B or vice-versa depends
on the value of the prize discovered inside box C.
As before, one can easily show that  this variant
of Pandora's problem belongs to PSPACE, but
unlike Weitzman's version of Pandora's problem,
there is no evidence that this
version is easier than PSPACE-complete.

These considerations motivate the study of
approximately optimal policies that are
computationally efficient, structurally
simple, or both. Our work initiates this
study.

\subsection{Our results and techniques}
\label{sec:results}

To put our results in context, we begin
this section with an easy observation
showing that simple, computationally
efficient policies can attain at least a
$\frac12$-approximation to the optimal
policy. Consider the following two
policies.
\begin{enumerate}
\item \label{item:weitzman}
{\bf [Policy A]} Run
Weitzman's optimal policy, ignoring the
fact that the searcher has the option to
select boxes without opening them.
\item \label{item:best-closed}
{\bf [Policy B]} Leave every box closed,
and select the one with the highest expected value.
\end{enumerate}
Among all adaptive policies, Policy A is the
one that maximizes
the expected net contribution (i.e., the value if
selected, minus the inspection cost) of
open boxes, whereas Policy B
maximizes the expected net contribution
of closed boxes.
Hence the combined value of Policies A and B
bounds from above the combined value that
the optimal policy obtains from both
open and closed boxes. The better
of A and B must consequently attain at least half the
value of the optimal policy.

For any specified $\eps>0$,
it is not hard to construct
a problem instance such that
neither Policy A nor Policy B
attains more than $\frac12 + \eps$
of the value of the optimal policy.
To achieve a better approximation
factor, we focus on a broader class
of simple policies that includes
both of the aforementioned ones.

Let us define a {\em committing policy} to be one
that, before it opens any boxes,
must pre-commit to a partition of the
$n$ boxes into a set of boxes that will
never be opened and a set that will never
be selected without first being opened; in
addition it pre-commits to an order in which
the boxes in the latter set will be opened.
Such a policy is almost non-adaptive;
the only way in which it may adjust its
behavior in response to information revealed
during the search process is that it may
terminate the search early. In this
sense, questions about the ability of
 committing policies to approximate the
optimal adaptive policy are akin to
questions about {\em adaptivity gaps}
in stochastic optimization~\cite{an,dgv,gns16,gns17}.

The foregoing discussion inspires two
interrelated questions.
\begin{description}
\item[Question 1] {\em For which values of
$\alpha$ is there a polynomial-time algorithm
that $\alpha$-approximates the optimal adaptive
policy?}
\item[Question 2] {\em What is the worst-case
ratio between the value of the optimal  committing
policy and that of the optimal adaptive policy?}
\end{description}
We show that for the general case of Pandora's
problem with nonobligatory inspection, there is
a polynomial-time algorithm to identify the
optimal  committing policy, and this policy
always attains at least $1 - \frac1e = 0.63\ldots$
fraction of the optimal policy's value.
This furnishes a non-trivial lower bound on the
answers to Questions 1 and 2 above. Our
second main result fully settles Question 2
for the case of two boxes: we show that
the optimal  committing policy is always
at least a $\frac45$-approximation to
the optimal adaptive policy, and that
this approximation factor is tight.
The main question left open by our work
is whether the factor of $1-\frac1e$
for the general case of $n$ boxes can
be improved. We conjecture that the answer
is yes. In fact, we believe it is plausible
that the ratio between the values of the
optimal  committing policy and the optimal
adaptive policy is never less than $\frac45$,
even when the number of boxes is greater
than 2.

In the remainder of this section, we briefly
discuss the techniques used to achieve these
results. Our proof that  committing policies
attain a $(1-\frac1e)$-approximation to the
value of the optimal adaptive policy starts
with a crucial observation: Pandora's
problem with nonobligatory inspection
can be recast as an equivalent problem
in which inspection is obligatory, and the
boxes are grouped into pairs each
consisting of one of the boxes from the
original (nonobligatory) problem instance
paired with a ``doppelganger'' whose
inspection cost is zero and whose value
is deterministically equal to the
expected value of the first box.
To make the problem with paired boxes
equivalent to the original problem instance,
we must impose an additional
constraint that search policies for the
paired-box problem may open at
most one of the two boxes in each pair.
This reduction, which appears simple and
natural in hindsight, is crucial
because it enables the application of two
powerful tools. The first is a lemma of
Kleinberg, Waggoner, and Weyl~\cite{kww}
that reduces the analysis of policies for
Pandora's problem and generalizations to
the analysis of algorithms for the
same optimization problem when the values
of items are revealed for free, but
are sampled from modified distributions.
In \Cref{sec:weitzman} we generalize
the lemma to account for policies that
may select a box without opening it, a
generalization which is vital for our
application.
The second tool is a theorem of Asadpour and
Nazerzadeh about the adaptivity gap of
stochastic submodular function maximization
problems. Once Pandora's problem
with nonobligatory inspection has been transformed
into a form where these two ingredients apply,
the derivation of the
$(1-\frac1e)$-approximation result
becomes nearly automatic.
The combination of the two
ingredients --- the Kleinberg-Waggoner-Weyl
amortization lemma from~\cite{kww} together
with adaptivity gaps for stochastic probing ---
was pioneered by Singla~\cite{singla}
to solve a problem he refers to as
{\em constrained utility maximization in
the price of information model}, which
generalizes our paired-box problem with
probing constraints.

To prove that the optimal  committing policy
can be identified in polynomial time we combine
three easy observations.
\begin{enumerate}
\item Of the $2^n$ ways of partitioning the $n$
boxes into those that always remain closed and those
that are never selected while closed, we need
only consider the $n+1$ partitions in which there
is at most one box of the former type.
\item For a fixed partition of boxes into two sets
as above, the optimal  committing policy constrained
to use this partition can easily be determined
by applying Weitzman's theorem.
\item The value of this constrained optimal policy
can be calculated in polynomial time.
\end{enumerate}

Finally, to show that the gap between  committing
policies and fully adaptive policies is $\frac45$
in the special case of two boxes, we express the
value of the optimal policy as a convex combination
of two quantities: its expected value conditional
on selecting a closed box, and its expected value
conditional on selecting an open box. We then
design a probability distribution over  committing
policies whose expected value can be bounded below
by a weighted sum of the same two quantities.
Minimizing the ratio of these two weighted sums
boils down a question about minimizing a specific
bivariate function, which can be solved by direct
calculation.

\subsection{Related Work}

We have already discussed the foundational work
on optimal search theory in economics, particlarly
Weitzman's paper~\cite{weitzman} that introduced
Pandora's problem and derived its solution.
The optimality of Weitzman's procedure turns out
to be a special case of the Gittins Index
Theorem~\cite{gittins,gittins-jones}, which ironically was
proven earlier although Weitzman obtained his
results independently and the connection between
these two theorems was only realized afterward.

Doval~\cite{doval} was the first to address
Pandora's problem with nonobligatory inspection,
though special cases were anticipated
in earlier unpublished work by Postl~\cite{postl}.
In addition to examples illustrating that optimal
policies in general need to be adaptive (as described
above), Doval's main results identify sufficient
conditions for the optimal policy to have
a simple structure. In particular, Theorem 1
in~\cite{doval} identifes a sufficient
condition under which the optimal policy is
a committing policy. The sufficient condition
is quite technical, but one corollary is that
a committing policy is optimal
whenever boxes have equal inspection costs and
are totally ordered by the ``mean-preserving spread''
relation. Doval also provides a complete solution
for the case when the boxes have equal costs,
the value of each is sampled from a distribution
with two-point support, and the lower support
point is the same for all boxes.

The blending of Pandora's problem with ideas
from combinatorial optimization and algorithmic
game theory was initiated by Kleinberg, Waggoner,
and Weyl~\cite{kww}. Their paper introduced a
novel method for analyzing optimal and
approximately-optimal policies for Pandora's
problem and generalizations, by relating the
expected utility of the policy to expected values
of related quantities in a simpler environment
without inspection costs. The paper primarily
applies this method to analyze the price of
anarchy of a descending price auction when bidders
face a cost to inspect their own value, but
it also analyzes various extensions including
one in which inspection is optional; the price
of anarchy of the descending auction in this
setting is shown to be no worse than
$\frac12 - \frac1{2e} \approx 0.316$.
Singla~\cite{singla} applied the analysis
technique introduced in~\cite{kww} to
a much broader family of combinatorial
optimization problems, providing a general
transformation to convert
{\em frugal algorithms} (a type of greedy
algorithm) for combinatorial optimization
problems into policies for solving
combinatorial counterparts to Pandora's
problem, i.e., generalizations in which
the searcher still must pay a cost to
open each box, but may be allowed to
select multiple boxes, subject to
feasibility constraints on the set of
selected boxes. As noted earlier, among
the problems solved in~\cite{singla} is
a constrained utility maximization problem
featuring probing constraints that
generalize the probing constraint in our
paired-box problem. 

Adaptivity gaps have been studied for various stochastic
optimization problems. Any such problem consists of a set
of elements whose values are independent random variables.
The algorithm knows the distributions of these variables,
but not the actual realizations. The only way to learn the
actual realizations is to probe these elements. If the
value of the optimal adaptive probing policy can always
be approximated, to within a factor of $\alpha$, by the
value of a simple policy that performs probes in a fixed,
predetermined order until a stopping time is reached, then
we say the problem has an adaptivity gap of $\alpha$.
In one of the earliest papers on adaptivity gaps in
stochastic optimization,
Dean, Goemans, and Vondrak~\cite{dgv} studied a
stochastic variant of the $0/1$ knapsack problem, where
items have deterministic values but their sizes are independent
random variables and the act of placing an item in the knapsack
reveals its size. They showed that adaptivity gap is constant and
provided constant factor non-adaptive approximations.

The proof of our main result makes use of adaptivity gaps for
stochastic submodular optimization with constraints on probing.
Asadpour and Nazerzadeh~\cite{an} bound the adaptivity gap to
$1 - \frac1e$ for maximizing stochastic
monotone submodular functions when elements to probe should satisfy
matroid feasibility constraints. Adaptivity gaps for much more
general families of constraints were subsequently proven by
Gupta, Nagarajan, and Singla~\cite{gns16,gns17}.
In addition to feasibility constraints
over sets of elements to probe, there may also be constraints on the
ordering of the probes. Gupta, Nagarajan, and Singla showed a
constant adaptivity gap for submodular functions under arbitrary
prefix-closed constraints on the sequence of elements
probed~\cite{gns17}.


\section{Preliminaries}
\label{sec:prelim}

In this section, we formally define our model and discuss two related problems: \textit{Pandora's problem with required inspection} and \textit{maximizing a stochastic monotone submodular function}. Then we introduce a class of search procedures called {\em committing
policies} and explain why the optimal committing policy has a simple structure and is computationally easy to identify and implement.

\subsection{Model}\label{model}
An agent has a set of $n$ boxes. Box $i$, $1 \le i \le n$, contains a prize, $v_i$, distributed according to distribution $F_i(v_i)$ with expected value $\E v_i$. Prizes inside boxes are independently distributed.
Box $i$ has inspection cost $c_i$. While $F_i$ and $c_i$ are known; $v_i$ is not.

The agent sequentially inspects boxes, and search is with recall. Given a set of uninspected boxes, $\unopened$, and a vector of realized sampled prizes, $v$, the agent decides whether to stop or to continue search; if she decides to continue search she decides which box in $\unopened$
to inspect next. If she decides to inspect box $i$, she pays cost $c_i$ to instantaneously learn her value $v_i$.
If she decides to stop search, she can choose to select whichever box she pleases, regardless of whether it is inspected or not.
We use $\inspecti$ as an indicator for box $i$ being inspected and $\allocsubi$ as an indicator for the agent obtaining box $i$. Since one box can be obtained, $\sum_i \allocsubi \le 1$. The agent is an expected utility maximizer, where utility, $u$, is defined as the value of the box selected minus the sum of inspection costs paid. Given $v$, the vector of realized sampled prizes, and the two vectors of indicator variables, $\allocsub{ }$ and $\inspect{ }$, respectively indicating which boxes were selected and inspected, we have:

\[u(v,\allocsub{ },\inspect{ }) = \sum_i (\allocsubi v_i - \inspecti c_i) .\]

\subsection{Required Inspection}
\label{sec:weitzman}

Consider imposing the additional constraint that
a box can only be selected after it is inspected.
In other words, we require $\allocsubi \le \inspecti$ for each $i$.

Weitzman~\cite{weitzman} finds the optimal procedure to maximize expected utility when inspection is required. The optimal solution is an index-based policy, in which the agent inspects boxes in decreasing order of their indices, $\strike_i$, where
$\strike_i$ is the unique solution to
\[ {\E}_{v_i\sim F_i}\left[ (v_i - \strike_i)^+ \right] = c_i \]
and is also known as the reservation value of box $i$.
The search stops either when one of the realized values is above the
reservation value of every remaining uninspected box,
or when the agent has inspected all of the boxes.

Kleinberg et al.~\cite{kww} develop a new interpretation
of Weitzman’s characterization. They introduce an important
property of policies that we will call ``non-exposure'',
defined as follows.
\begin{definition}[Non-exposed Policy]
A policy is non-exposed if it is guaranteed to select any inspected box
whose value is found to satisfy $v_i > \strike_i$. In other words,
a policy is non-exposed if the event $(\inspecti - \allocsubi) (v_i - \strike_i) > 0$
has probability zero, for every box $i$.
\end{definition}
The key to the analysis of Weitzman's
optimal policy in~\cite{kww} is a family of random variables
$\covered_i \stackrel{\Delta}{=} \min \{ v_i, \strike_i \}$
defined for each box $i$. Kleinberg et al.\ prove that for
any policy that satisfies the required-inspection constraint
$\allocsubi \le \inspecti$, the
net contribution of box $i$ to the
expected value of the policy is
bounded above by $\E[\allocsubi \covered_i]$,
with equality if and only if the policy is non-exposed.

\begin{lemma} \label{lm:amortization} \cite{kww}
Given any $F_i$ and any policy that satisfies $\allocsubi \le \inspecti$ pointwise,
  \begin{equation} \label{eq:kww}
    \E \left[ \allocsubi v_i - \inspecti c_i \right] \le \E [\allocsubi \covered_i] .
  \end{equation}
Furthermore, this holds with equality for every box $i$
if and only if the policy is non-exposed.
\end{lemma}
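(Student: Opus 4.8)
The plan is to prove \eqref{eq:kww} one box at a time, by computing the exact gap between the two sides and recognizing it as the expectation of a nonnegative random variable. First I would use the pointwise identity $\min\{v_i,\strike_i\}-v_i = -(v_i-\strike_i)^+$ to rewrite the difference of the two integrands as
\[ \allocsubi\covered_i - \bigl(\allocsubi v_i - \inspecti c_i\bigr) \;=\; \inspecti c_i - \allocsubi\,(v_i-\strike_i)^+ . \]
Taking expectations, the lemma's inequality becomes equivalent to $\E[\allocsubi(v_i-\strike_i)^+]\le\E[\inspecti c_i]$. Since $(v_i-\strike_i)^+\ge 0$ and the policy satisfies $\allocsubi\le\inspecti$ pointwise, we have $\allocsubi(v_i-\strike_i)^+\le\inspecti(v_i-\strike_i)^+$, so it suffices to show $\E[\inspecti(v_i-\strike_i)^+]\le c_i\,\Prob[\inspecti=1]$.

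The crux — and the step I expect to be the main obstacle — is the claim that the indicator $\inspecti$ is independent of the prize value $v_i$. The intuition is that a policy obeying $\allocsubi\le\inspecti$ only learns $v_i$ after it has decided to open box $i$; whether box $i$ is ever opened is thus determined by the policy's (deterministic) choices together with the realized values of the \emph{other} boxes, which are independent of $v_i$. Making this rigorous requires setting up the underlying probability space and filtration carefully and arguing that the event $\{\inspecti=1\}$ is $\sigma(v_j:j\neq i)$-measurable. Granting the independence, $\E[\inspecti(v_i-\strike_i)^+]=\Prob[\inspecti=1]\cdot\E[(v_i-\strike_i)^+]$, and the defining equation $\E[(v_i-\strike_i)^+]=c_i$ for the reservation value $\strike_i$ closes the inequality.

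For the equality characterization I would trace back where slack can be lost. The only inequality between the two sides of \eqref{eq:kww} came from $\allocsubi(v_i-\strike_i)^+\le\inspecti(v_i-\strike_i)^+$, so in view of the rewriting above the total deficit for box $i$ is exactly $\E\bigl[(\inspecti-\allocsubi)(v_i-\strike_i)^+\bigr]$. This is the expectation of a nonnegative quantity, hence it is zero if and only if $(\inspecti-\allocsubi)(v_i-\strike_i)^+=0$ almost surely. Because $\inspecti-\allocsubi$ takes values in $\{0,1\}$ under the constraint $\allocsubi\le\inspecti$, this event coincides with $\{(\inspecti-\allocsubi)(v_i-\strike_i)>0\}$ having probability zero, i.e., with the non-exposure condition. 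Since \eqref{eq:kww} holds for each box separately, equality holds for every box simultaneously precisely when the policy is non-exposed, as claimed. (The statement is from~\cite{kww}; the work here is to recast the argument in the notation of this paper, and the extension to policies that may select a box without opening it is carried out in the development referenced in \Cref{sec:results}.)
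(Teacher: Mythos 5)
Your proof is correct and uses essentially the same ingredients as the paper's own argument for the generalized \Cref{lm:new-amort} (the paper cites \Cref{lm:amortization} to~\cite{kww} without reproving it): the independence of $\inspecti$ from $v_i$, the identity $\covered_i = v_i - (v_i-\strike_i)^+$, the pointwise bound coming from $\allocsubi \le \inspecti$, and the defining relation $\E[(v_i-\strike_i)^+]=c_i$. The only stylistic difference is that you compute the exact deficit $\E[(\inspecti-\allocsubi)(v_i-\strike_i)^+]$ globally rather than conditioning on $\inspecti$ and splitting into cases, but this is the same argument reorganized and, since the constraint $\allocsubi\le\inspecti$ makes the $\inspecti=0$ case vacuous, your streamlining is entirely legitimate.
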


\Cref{lm:amortization} can be interpreted as providing an
accounting scheme that amortizes a policy's expected inspection costs
by deducting them from the expected value of the box it eventually selects.
This accounting scheme exactly characterizes the value of non-exposed
policies, and furnishes an upper bound on the value of every other policy.
The benefit of the amortization is that it reduces the problem of analyzing
policies for Pandora's problem to the (generally simpler) problem of
analyzing rules for selecting boxes in an environment where the value
of box $i$ is $\kappa_i$, and this value can be queried at no cost.
A first application of this technique is the following
characterization of the optimal policy with required
inspection, and its expected utility.
\begin{corollary} \label{cr:w_utility} \cite{kww}
  Weitzman's policy on boxes $1 \le i \le n$ with distributions $F_i$ and
  inspection costs $c_i$, achieves expected utility $\E[\max_i \covered_i]$;
  the expected utility of any other policy cannot exceed this bound.
\end{corollary}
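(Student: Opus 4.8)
The plan is to derive the corollary from \Cref{lm:amortization} by proving two matching inequalities: an upper bound that applies to every policy respecting the required-inspection constraint $\allocsubi \le \inspecti$, and the reverse inequality for Weitzman's policy, obtained by checking that it meets the equality conditions of the lemma.

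For the upper bound, fix any such policy. Its expected utility $\sum_i \E[\allocsubi v_i - \inspecti c_i]$ is, by \Cref{lm:amortization}, at most $\sum_i \E[\allocsubi \covered_i] = \E[\sum_i \allocsubi \covered_i]$. Since at most one box is ever selected, $\sum_i \allocsubi \le 1$ pointwise, so $\sum_i \allocsubi \covered_i \le \max_i \covered_i$ pointwise (if the searcher is allowed to select nothing one adjoins a deterministic null box of value $0$, which does not change $\max_i \covered_i$). Taking expectations gives expected utility at most $\E[\max_i \covered_i]$.

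For the reverse inequality it suffices, again by \Cref{lm:amortization}, to show that Weitzman's policy is non-exposed and that the box it selects is always a maximizer of $\covered_i$; then its expected utility is exactly $\sum_i \E[\allocsubi \covered_i] = \E[\max_i \covered_i]$. Order the boxes so $\strike_1 \ge \cdots \ge \strike_n$. The stopping rule implies that the policy opens box $t$ only if the running maximum of the values seen so far is at most $\strike_t$; hence every box opened before box $t$ has value at most $\strike_t$. In particular, if some opened box $i$ has $v_i > \strike_i$, then $v_i$ strictly exceeds the value of every box opened earlier, the policy stops immediately, and it selects box $i$ --- so it is non-exposed, and moreover whenever the selected box $j^\star$ satisfies $v_{j^\star} > \strike_{j^\star}$ it must have been the last box opened. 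Now let $S$ be the set of opened boxes when the policy halts (a prefix of the ordering) and $j^\star = \argmax_{j \in S} v_j$. For an unopened box $k$, $\strike_k \le \strike_{j^\star}$ and the stopping condition gives $v_{j^\star} > \strike_k$, so $\covered_{j^\star} \ge \min(\strike_k, \strike_{j^\star}) = \strike_k \ge \covered_k$. For an opened box $k \ne j^\star$, a two-case check (on whether $v_{j^\star} \le \strike_{j^\star}$ or not, the latter using $v_k \le \strike_{j^\star}$ from the running-maximum bound above) gives $\covered_{j^\star} \ge \covered_k$. Hence $\covered_{j^\star} = \max_i \covered_i$, as required.

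The only nontrivial step is this last one: carefully using the stopping rule to argue both that Weitzman's policy never ``leaves money on the table'' by failing to take an opened box with $v_i > \strike_i$, and that the box it does take is a $\covered$-maximizer even in the borderline case where its realized value exceeds its own reservation value. Fixing the convention for the economically degenerate situation in which the searcher selects no box is a minor bookkeeping point, and everything else follows mechanically from the amortization lemma.
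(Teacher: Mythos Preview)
Your argument is correct and is precisely the standard derivation of \Cref{cr:w_utility} from \Cref{lm:amortization}: bound every policy above by $\E[\max_i \covered_i]$ using $\sum_i \allocsubi \le 1$, and then verify that Weitzman's policy is non-exposed and always selects a $\covered$-maximizer. The case analysis you sketch for the opened boxes is sound; in particular, your observation that when $v_{j^\star} > \strike_{j^\star}$ the box $j^\star$ must have been the last one opened is exactly what makes the running-maximum bound $v_k \le \strike_{j^\star}$ available for every other opened $k$.

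Note that the paper itself does not prove this corollary; it simply imports it from~\cite{kww}. So there is no in-paper proof to compare against, but what you have written is essentially the argument given in that reference, specialized to the single-item selection setting.
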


Since Pandora's problem with nonobligatory inspection
allows policies that may violate the inequality
$\allocsubi \le \inspecti$, in the sequel
we will need a generalization of \Cref{lm:amortization}
that pertains to such policies.
\begin{lemma} \label{lm:new-amort}
  Given any policy for Pandora's problem with nonobligatory
  inspection, and any box $i$, let
  \[
    \tilde{\covered}_i = \begin{cases}
      \covered_i & \mbox{if } \inspecti = 1 \\
      \E v_i & \mbox{if }  \inspecti = 0 .
    \end{cases}
  \]
  The inequality
  \begin{equation} \label{eq:amort}
    \E \left[ \allocsubi v_i - \inspecti c_i \right] \le \E [\allocsubi \tilde{\covered}_i] .
  \end{equation}
  is always satisfied, and the two sides are equal for every box $i$
  if and only if the policy is non-exposed.
\end{lemma}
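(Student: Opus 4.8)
The plan is to prove \eqref{eq:amort} by splitting the net contribution of box $i$ according to whether the policy inspects box $i$, showing that the ``uninspected'' part contributes exactly zero and that the ``inspected'' part is controlled by essentially the same algebra that underlies \Cref{lm:amortization}.

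First I would write, using $\inspecti c_i = c_i\indic{\inspecti=1}$ together with $\tilde{\covered}_i = \covered_i$ on $\{\inspecti=1\}$ and $\tilde{\covered}_i = \E v_i$ on $\{\inspecti=0\}$,
\[
  \E\left[\allocsubi v_i - \inspecti c_i - \allocsubi\tilde{\covered}_i\right]
  = \E\left[\left(\allocsubi v_i - c_i - \allocsubi\covered_i\right)\indic{\inspecti=1}\right]
  + \E\left[\allocsubi\left(v_i - \E v_i\right)\indic{\inspecti=0}\right].
\]
For the second summand, the crucial point is that on the event $\{\inspecti=0\}$ the policy never learns $v_i$, so the random variable $\allocsubi\indic{\inspecti=0}$ --- the indicator that box $i$ is selected while still closed --- is independent of $v_i$; hence this summand equals $\E[\allocsubi\indic{\inspecti=0}]\cdot\E[v_i-\E v_i]=0$. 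I would justify the independence by the standard fact that the trajectory of an adaptive policy, restricted to the event that box $i$ is never opened, is measurable with respect to the internal randomness together with the values of the boxes other than $i$.

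For the first summand I would use the pointwise identity $v_i - \covered_i = (v_i-\strike_i)^+$ to rewrite it as $\E[(\allocsubi(v_i-\strike_i)^+ - c_i)\indic{\inspecti=1}]$. Since whether box $i$ is ever opened is decided before its value is revealed, $\indic{\inspecti=1}$ is independent of $v_i$; combined with the defining equation $c_i = \E_{v_i\sim F_i}[(v_i-\strike_i)^+]$ of the reservation value this gives $\E[c_i\indic{\inspecti=1}] = \E[(v_i-\strike_i)^+\indic{\inspecti=1}]$, so the first summand equals
\[
  \E\left[(\allocsubi-1)(v_i-\strike_i)^+\indic{\inspecti=1}\right]
  = -\,\E\left[(1-\allocsubi)(v_i-\strike_i)^+\indic{\inspecti=1}\right] \le 0,
\]
because $\allocsubi\in\{0,1\}$ and $(v_i-\strike_i)^+\ge 0$. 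Adding the two summands yields \eqref{eq:amort} and shows that the gap between its two sides is exactly $\E[(1-\allocsubi)(v_i-\strike_i)^+\indic{\inspecti=1}]$; hence equality holds for box $i$ precisely when $(1-\allocsubi)(v_i-\strike_i)^+\indic{\inspecti=1}=0$ almost surely, i.e. when the policy is certain to select box $i$ on the event that it opens box $i$ and finds $v_i>\strike_i$. Imposing this for every $i$ is exactly non-exposure, which gives the ``if and only if''.

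The only step that is not pure bookkeeping is the rigorous justification of the two independence claims (that $\allocsubi\indic{\inspecti=0}$ and $\indic{\inspecti=1}$ are each independent of $v_i$). Both are intuitively immediate from the information structure --- the searcher can learn $v_i$ only by paying $c_i$ to open box $i$ --- but a careful write-up should introduce the filtration generated by the sequence of (box, revealed value) pairs the policy observes and verify that the two events in question are determined without reference to $v_i$. Everything else reduces to the identity $v_i - \min\{v_i,\strike_i\} = (v_i-\strike_i)^+$ and the definition of the reservation value, just as in the proof of \Cref{lm:amortization}.
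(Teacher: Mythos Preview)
Your proof is correct and follows essentially the same approach as the paper: both split according to the event $\{\inspecti=1\}$ versus $\{\inspecti=0\}$, use the independence of $v_i$ from $\inspecti$ (and from $\allocsubi$ on $\{\inspecti=0\}$) together with the identity $v_i-\covered_i=(v_i-\strike_i)^+$ and the defining equation of $\strike_i$, and identify the slack as $\E[(1-\allocsubi)(v_i-\strike_i)^+\indic{\inspecti=1}]$ to obtain the non-exposure characterization. The only cosmetic difference is that the paper phrases the case analysis via conditional expectations $\E[\,\cdot\mid\inspecti\,]$ whereas you work directly with indicator functions.
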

\begin{proof}
First observe that $v_i$ is independent of $\inspecti$,
hence
\begin{align}
  \label{eq:indep.1}
    \E[ v_i \, \mid \, \inspecti ] &= \E v_i \\
  \label{eq:indep.2}
    \E[ (v_i - \strike_i)^+ - c_i \, \mid \, \inspecti ] &=
    \E[ (v_i - \strike_i)^+ - c_i ] = 0 .
\end{align}
Both of these equations will be used in the sequel.

To prove the inequality asserted in the lemma,
we will prove
the following inequality of conditional
expectations pointwise, then integrate over
$\inspecti$.
\begin{equation} \label{eq:amort.2}
  \E[ \allocsubi v_i - \inspecti c_i \, \mid \, \inspecti ]
  \le
  \E[ \allocsubi \tilde{\covered}_i \, \mid \, \inspecti ]
\end{equation}
There are two cases to consider. When $\inspecti = 0$,
$\allocsubi$ is conditionally independent of
$v_i$ because the contents of box $i$ are never even
inspected, so can have no influence on the decision
whether to select box $i$ or not.
Hence
\begin{align*}
  \E[ \allocsubi v_i - \inspecti c_i \, \mid \, \inspecti = 0 ] &=
  \E[ \allocsubi v_i \, \mid \, \inspecti = 0 ] \\
  &=
  \E[ \allocsubi \, \mid \, \inspecti = 0 ] \cdot
  \E[ v_i \, \mid \, \inspecti = 0 ] \\
  &=
  \E[ \allocsubi \, \mid \, \inspecti = 0 ] \cdot (\E v_i) \\
  &=
  \E[ \allocsubi \tilde{\covered}_i \, \mid \, \inspecti = 0 ]
\end{align*}
which establishes that the integrands on the two sides
of inequality~\eqref{eq:amort.2} are equal when
$\inspecti=0$. When $\inspecti=1$ we use the equation
$\tilde{\covered}_i = \covered_i = v_i - (v_i - \strike_i)^+$
in the following manipulation.
\begin{align*}
  \E[ \allocsubi v_i - \inspecti c_i \, \mid \, \inspecti = 1 ] &=
  \E[ \allocsubi \tilde{\covered}_i +
      \allocsubi (v_i - \strike_i)^+ - c_i \, \mid \, \inspecti = 1 ] \\
  & \le
  \E[ \allocsubi \tilde{\covered}_i + (v_i - \strike_i)^+ - c_i \, \mid \, \inspecti = 1 ] \\
  & =
  \E[ \allocsubi \tilde{\covered}_i \, \mid \, \inspecti=1 ]
\end{align*}
Hence inequality~\eqref{eq:amort.2} also holds when $\inspecti=1$.

The final sentence of the lemma asserts a necessary and sufficient
condition for equality in~\eqref{eq:amort}. To justify this
condition, note that every step in the
derivation of inequality~\eqref{eq:amort}
is an equation except for the inequality
\begin{equation} \label{eq:amort.3}
    \E[ \allocsubi \tilde{\covered}_i +
    \allocsubi (v_i - \strike_i)^+ - c_i \, \mid \, \inspecti = 1 ] \\
      \le
    \E[ \allocsubi \tilde{\covered}_i + (v_i - \strike_i)^+ - c_i \, \mid \, \inspecti = 1 ] .
\end{equation}
Hence, strict inequality holds in~\eqref{eq:amort}
if and only if there is a
positive probability that $\inspecti=1$ and
$\allocsubi (v_i-\strike_i)^+ < (v_i - \strike_i)^+$.
The relations $\inspecti=1$ and
$\allocsubi (v_i-\strike_i)^+ < (v_i - \strike_i)^+$
hold precisely when the policy violates the definition
of non-exposure.
\end{proof}

\subsection{Stochastic Submodular Maximization}
\label{sec:stochastic}

Consider the problem of maximizing a stochastic monotone submodular function $f$ with respect to a matroid
constraint $\M$. Suppose $f: \R^n_+ \rightarrow \R_+$ is a function of $n$ random variables, namely, $\A = \{X_1, X_2, \cdots, X_n\}$.
Assume $f$ is submodular, meaning
\begin{equation} \label{eq:submodular}
    \forall x,y \in \R^n_+ \quad f(x) + f(y) \ge f(x \wedge y) + f(x \vee y)
\end{equation}
where $x \wedge y$ and $x \vee y$ respectively denote the coordinate-wise
minimum and maximum of vectors $x$ and $y$.

A policy $\pi$ picks the elements to inspect one by one (perhaps, based on the realized value of the previous elements) until it stops. Once $\pi$ stops, the current state is a random vector $\Theta^{\pi} = (\theta_1, \theta_2, \cdots, \theta_n)$, where $\theta_j$ denotes the realization of $X_i$, if $i$ is inspected by the policy, and is equal to $0$ otherwise.
The objective of stochastic submodular maximization is to optimize the expected value of a policy, i.e., $\underset{\pi}{\textrm{Maximize}}~\E[f(\Theta^{\pi})]$, subject to feasibility.
The feasibility constraint is modeled using a matroid. For a given matroid $\M$ defined on the ground set of the aforementioned random variable set $\A$, a policy $\pi$ is called feasible if the subset of random variables it inspects is always an independent set of $\M$.

Asadpour and Nazerzadeh \cite{an} compare the performance of the best \textbf{adaptive} and \textbf{non-adaptive} policies. In adaptive policies, at each point in time all the information regarding the previous inspections of the policy is known. In other words, the policy has access to the actual realized value of all the elements it has inspected so far. In contrast, non-adaptive policies do not have access to such information and should make their decisions (about which random variables to inspect) before observing the outcome of any of them. They show that there exists a non-adaptive policy that achieves at least a $1- \frac1e \approx 0.63$ fraction of the value of the optimal adaptive policy.

\begin{lemma}\label{lm:AN}\cite{an}
There exists a non-adaptive policy that achieves $1- \frac1e \approx 0.63$ fraction of the optimal policy in maximizing a stochastic monotone submodular function with respect to matroid feasibility.
\end{lemma}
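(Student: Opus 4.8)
This is the theorem of Asadpour and Nazerzadeh~\cite{an}, so strictly one would just invoke it; for context we outline the argument. Fix an optimal adaptive policy $\pi$, with value $\mathrm{OPT}=\E[f(\Theta^{\pi})]$. Since $f$ is monotone and probing is free in this model, we may take $\pi$ to be deterministic and to always probe a maximal feasible set (a basis of $\M$), of size $k=\operatorname{rank}(\M)$. For a fixed feasible set $S$ (an independent set of $\M$), let $\Theta_S$ be the random vector whose $i$-th coordinate is the realization of $X_i$ if $i\in S$ and $0$ otherwise, and put $g(S)=\E[f(\Theta_S)]$. Then $g$ is monotone submodular: monotonicity is inherited from $f$, and submodularity survives both the restriction of a submodular function to the coordinates indexed by $S$ and the subsequent averaging. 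Because a non-adaptive policy may probe any fixed feasible set, it suffices to exhibit a feasible $S^\star$ with $g(S^\star)\ge(1-\tfrac1e)\,\mathrm{OPT}$.

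The plan is a greedy charging scheme that builds $S^\star$ one element at a time while running $\pi$ alongside it. Writing $A=A(\xi)$ for the realization-dependent set probed by $\pi$, monotonicity gives $\E_\xi[f(\Theta_{S_{t-1}\cup A})]\ge\mathrm{OPT}$, and submodularity bounds $f(\Theta_{S_{t-1}\cup A})-f(\Theta_{S_{t-1}})$ by a sum of single-element marginals over $e\in A$ relative to $S_{t-1}$; since $|A|\le k$, the intent is to obtain the telescoping recursion $g(S_t)-g(S_{t-1})\ge\tfrac1k\big(\mathrm{OPT}-g(S_{t-1})\big)$, which yields $g(S^\star)\ge\big(1-(1-1/k)^k\big)\mathrm{OPT}\ge(1-\tfrac1e)\mathrm{OPT}$ for a uniform matroid, with the general matroid case following from the exchange property --- or, alternatively, by running the continuous greedy algorithm on the matroid polytope of $g$ and then rounding (pipage or swap rounding), which preserves the guarantee in expectation.

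The delicate point, which I expect to be the main obstacle, is that this scheme cannot be executed naively: the probed set $A$ is correlated with the realizations, so a conditional marginal $\E_\xi[f(\Theta_{S_{t-1}+e})-f(\Theta_{S_{t-1}})\mid e\in A]$ is not the unconditional marginal $g(S_{t-1}+e)-g(S_{t-1})$, and replacing it by a maximum over $e$ is too lossy. Indeed, the optimal adaptive policy can use early realizations to decide which basis of $\M$ to complete, and the natural static surrogates all fail to capture this: neither the multilinear extension nor the concave closure of $g$, evaluated at any point of the matroid polytope, need be as large as $\mathrm{OPT}$, and the prophet relaxation, though an upper bound, is far too weak. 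The remedy is to keep the charging itself adaptive --- charging step by step along the decision tree of $\pi$, equivalently tracking the multilinear extension of $g$ along a fractional trajectory that follows $\pi$ --- and then invoke the familiar $(1-\tfrac1e)$ arithmetic for monotone submodular functions (the same estimate behind the continuous greedy and the correlation gap). Once that reconciliation of adaptivity with submodularity is in hand, the remaining ingredients --- monotone submodularity of $g$, feasibility of $S^\star$, and the telescoping --- are routine.
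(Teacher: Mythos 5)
The paper does not prove this lemma: it is imported verbatim from Asadpour and Nazerzadeh \cite{an}, with \Cref{lem:fractional_non_adapt} recording the companion rounding step. So there is no in-paper proof to compare against, and you are right that strictly one just cites \cite{an}. That said, your ``for context'' sketch is worth examining on its own terms, and as written it stops just short of the theorem's actual content.

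You set up the right objects ($g(S)=\E[f(\Theta_S)]$, monotone submodular; the marginal probe probabilities of the adaptive policy as a point in the matroid polytope), and you correctly identify the central obstacle: because the adaptive policy's probed set $A(\xi)$ is correlated with the realizations, the conditional marginal $\E[\mathbf{1}[e\in A]\,(f(\Theta_{S+e})-f(\Theta_S))]$ is not $\Pr[e\in A]\,(g(S+e)-g(S))$, so the naive greedy charging does not close. You also correctly observe that neither the multilinear extension nor the concave closure of $g$ at the marginal vector need dominate $\mathrm{OPT}$. The gap is in the ``remedy.'' Declaring that one should ``track the multilinear extension along a fractional trajectory that follows $\pi$'' and then ``invoke the familiar $(1-\tfrac1e)$ arithmetic'' is essentially restating the theorem as if it were a known lemma: the whole point of \cite{an} is to establish, via a hybridization/coupling argument over the adaptive decision tree, that the differential inequality behind that arithmetic still holds once the probed-set/realization correlation has been accounted for. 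Nothing in the sketch actually produces that inequality. Separately, the ``alternatively, run continuous greedy on $g$ and round'' route you float is not a valid alternative: continuous greedy applied to $g$ yields a $(1-\tfrac1e)$-approximation to $\max_{S\in\M} g(S)$, the best \emph{non-adaptive} value, which can be strictly smaller than $\mathrm{OPT}$; it does not compare to the adaptive benchmark and so cannot by itself certify the adaptivity gap. In short: the objects and the identified obstacle are right, but the step that actually proves $F(y^*)\ge(1-\tfrac1e)\,\mathrm{OPT}$ for the adaptive marginal vector $y^*$ is missing, and that step is the theorem.
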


We now use the multilinear relaxation of $f$ to define the value of {\em fractional non-adaptive policies} \cite{an}.  A fractional non-adaptive policy is determined by a vector $y \in [0,1]^n$. This policy inspects elements in the (random) set $Y$, a set that is defined to include each $X_i\in\A$ with probability $y_i$, independently for each $i$.

We use $F(y)$ to denote the expected value obtained by the fractional non-adaptive policy associated with $y$.
Using the notation $\Theta^Y$ to denote the random vector $\Theta^\pi$ when $\pi$ is the non-adaptive
policy associated with set $Y$, we have
\begin{eqnarray} \label{multi-linear}
 F(y) :=  \sum_{Y \subseteq \{0, 1\}^n} \left[\left(\prod_{i \in Y} y_i \prod_{i \notin Y} (1 - y_i) \right) \E f(\Theta^Y)\right].
\end{eqnarray}

\begin{lemma} \label{lem:fractional_non_adapt}
\cite{an} For any monotone submodular function with matroid $\M$ feasibility constraint, for any $y$ in the base polytope of  $\M$, there exists an integral (deterministic) non-adaptive policy with expected value greater than or equal to $F(y)$.
\end{lemma}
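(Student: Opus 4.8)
The plan is to recognize $F$ as the multilinear extension of an induced monotone submodular set function, and then round the point $y$ to an integral base of $\M$ without losing value, via pipage rounding. \emph{Identifying $F$.} Define a set function $g : 2^{\{1,\dots,n\}} \to \R_+$ by $g(Y) = \E\, f(\Theta^Y)$ — the expected value of the non-adaptive policy that probes exactly the elements of $Y$. Comparing with~\eqref{multi-linear}, $F$ is exactly the multilinear extension of $g$: $F(y) = \sum_{Y} \bigl(\prod_{i \in Y} y_i\bigr)\bigl(\prod_{i \notin Y}(1 - y_i)\bigr) g(Y)$.

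\emph{Monotonicity and submodularity of $g$.} Monotonicity is immediate: passing from $Y$ to $Y \cup \{k\}$ replaces a zero coordinate of $\Theta^Y$ by the nonnegative realization $X_k$, and $f$ is monotone. For submodularity, fix $Y \subseteq Z$ with $k \notin Z$ and apply the inequality~\eqref{eq:submodular} for $f$ to the vectors $x = \Theta^{Y \cup \{k\}}$ and $x' = \Theta^Z$ (evaluated on a common realization of $X_1,\dots,X_n$). A coordinatewise check gives $x \vee x' = \Theta^{Z \cup \{k\}}$ and $x \wedge x' = \Theta^Y$, so~\eqref{eq:submodular} reads $f(\Theta^{Y\cup\{k\}}) + f(\Theta^Z) \ge f(\Theta^Y) + f(\Theta^{Z \cup \{k\}})$; taking expectations yields $g(Y \cup \{k\}) - g(Y) \ge g(Z \cup \{k\}) - g(Z)$.

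\emph{Rounding.} Since $F$ is multilinear, $\partial^2 F / \partial y_i^2 = 0$; since $g$ is submodular, $\partial^2 F / \partial y_i \partial y_j = \E_{Y}\bigl[ g(Y\cup\{i,j\}) - g(Y\cup\{i\}) - g(Y\cup\{j\}) + g(Y)\bigr] \le 0$ for $i \ne j$ (with $Y$ ranging over random subsets of $\{1,\dots,n\}\setminus\{i,j\}$ with the appropriate marginals). Hence along any direction $e_i - e_j$ the second derivative of $F$ equals $-2\,\partial^2 F/\partial y_i\partial y_j \ge 0$, so $F$ is convex on every segment of that form. Now run pipage rounding on the base polytope $B(\M)$ of $\M$: while the current point $y'$ is not a vertex of $B(\M)$, let $\mathcal{F}$ be the minimal face of $B(\M)$ containing it; since the edges of a matroid base polytope are parallel to vectors $e_i - e_j$, $\mathcal{F}$ contains a segment through $y'$ in some such direction, and the endpoints of the maximal such segment inside $\mathcal{F}$ lie on proper faces of $\mathcal{F}$. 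By the convexity just established, $F$ is at least $F(y')$ at one of these endpoints; move there. The dimension of the minimal enclosing face strictly decreases each step, so after finitely many steps we reach a vertex $Y^\ast$ of $B(\M)$ — i.e., an integral base of $\M$ — with $g(Y^\ast) = F(Y^\ast) \ge F(y)$.

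\emph{Conclusion and main obstacle.} The deterministic non-adaptive policy that probes exactly the elements of $Y^\ast$, in any fixed order, is feasible because $Y^\ast$ is a base (hence independent) of $\M$, and its expected value is $\E\, f(\Theta^{Y^\ast}) = g(Y^\ast) \ge F(y)$, as claimed. I expect the rounding step to be the only real obstacle: one must invoke (or re-derive) the matroid-polytope fact that a non-vertex point of $B(\M)$ always admits a swap direction $e_i - e_j$ along which one can move to a lower-dimensional face, and check that the procedure terminates — everything else (identifying $F$ with a multilinear extension, monotonicity and submodularity of $g$, and convexity of $F$ along swap directions) is routine. As an alternative to pipage rounding, randomized swap rounding produces a random base $Y^\ast$ with $\E\, g(Y^\ast) \ge F(y)$, so again some base in its support works.
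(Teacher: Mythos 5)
The paper states this lemma as a citation to Asadpour and Nazerzadeh~\cite{an} and supplies no proof of its own, so there is no in-paper argument to compare against. Your reconstruction is correct and is the standard pipage-rounding proof: you correctly identify $F$ as the multilinear extension of the induced set function $g(Y)=\E f(\Theta^Y)$; the verification that $g$ is monotone submodular is sound, since applying the lattice inequality~\eqref{eq:submodular} to $x=\Theta^{Y\cup\{k\}}$ and $x'=\Theta^Z$ (with $Y\subseteq Z$, $k\notin Z$) does give $x\vee x'=\Theta^{Z\cup\{k\}}$ and $x\wedge x'=\Theta^Y$ coordinatewise; and the rounding step correctly uses that $\partial^2F/\partial y_i^2=0$ together with $\partial^2F/\partial y_i\partial y_j\le0$ makes $F$ convex along every $e_i-e_j$ segment, so that each pipage move within $B(\M)$ cannot decrease $F$. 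The one fact you leave to be ``invoked or re-derived'' --- that a non-vertex of $B(\M)$ always admits a swap direction $e_i-e_j$ whose maximal segment has both endpoints on lower-dimensional faces, so the procedure terminates at an integral base --- is the standard matroid-exchange description of the base polytope, and treating it as citable is consistent with the level of detail the paper itself adopts (it cites the lemma without proof). No gaps.
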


\subsection{Committing Policies with Nonobligatory Inspection}
\label{sec:committing}

Consider the problem of maximizing expected utility for the box problem with
nonobligatory inspection (as discussed in \Cref{model}). A class of policies
that will be central to our analysis are the {\em committing policies}, which
were discussed in \Cref{sec:intro} and are defined formally as follows.

\begin{definition}[Committing Policy]
  A policy is called {\em committing} if there exists a partition of the $n$ boxes
  into two sets, $S$ and $T$, and a total ordering of the elements of $T$,
  denoted by $\prec$, such that the following properties hold.
  \begin{enumerate}
  \item The policy never inspects a box in $S$:
    $\forall i \in S \;\; \E[\inspecti] = 0.$
  \item The policy never selects a box in $T$ before inspecting it:
    $\forall j \in T \;\; \E[\allocsub{j} \cdot (1 - \inspect{j})] = 0.$
  \item If $j,k \in T$ and $j \prec k$ then the policy never inspects $k$
    before it has inspected $j$.
  \end{enumerate}
  The set $S$ is called the {\em reservation set} of the committing
  policy.
\end{definition}

Among committing policies with a fixed reservation set, $S$, it is
easy to identify the one that maximizes expected utility.
\begin{definition} \label{def:P_S}
  Policy $\Pols$ simulates running Weitzman's optimal policy on a
  modified set of boxes, in which the boxes in $T = [n] \setminus S$
  are unchanged, but each box in $i \in S$ is modified so that its
  inspection cost is zero, and its value distribution is a point mass
  on $\E v_i$. When the policy in the simulation inspects or selects
  a box in $T$, policy $\Pols$ performs the same operation. When it
  inspects a box in $S$, policy $\Pols$ instead selects the same box
  without inspecting it.
\end{definition}
The proof of the following lemma is easy, and we defer it
to \Cref{sec:committing-omitted}, along with the
(also easy) proofs of the remaining two lemmas in this section.
\begin{lemma} \label{lm:pols}
  For every $S \subseteq [n]$, policy $\Pols$ attains the highest
  expected utility among
  all committing policies with reservation set $S$.
\end{lemma}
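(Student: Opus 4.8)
The plan is to show that any committing policy with reservation set $S$ has expected utility at most that of $\Pols$, by reducing to an application of \Cref{cr:w_utility} (Weitzman's optimality) on the modified box set. The key observation is that a committing policy with reservation set $S$ is, in effect, an adaptive policy for the modified box problem described in \Cref{def:P_S}: in the modified problem, each box $i \in S$ has inspection cost $0$ and deterministic value $\E v_i$, so "selecting box $i \in S$ without inspection" in the original problem corresponds to "inspecting and then selecting box $i \in S$" in the modified problem (these coincide because inspecting a modified $S$-box is free and reveals nothing new). Meanwhile, conditions (1)--(3) of the committing-policy definition guarantee that in the modified problem the policy respects $\allocsub{i} \le \inspect{i}$ pointwise — indeed condition (2) handles boxes in $T$ directly, and for boxes in $S$ we have just argued selection is accompanied by a (free) inspection — so the required-inspection constraint holds, and \Cref{cr:w_utility} applies.

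\begin{proof}
Fix $S \subseteq [n]$ and let $T = [n] \setminus S$. Consider the modified box problem of \Cref{def:P_S}: each box $i \in S$ has inspection cost $0$ and value distribution equal to the point mass at $\E v_i$, while each box $j \in T$ is unchanged. We first argue that every committing policy $\pi$ with reservation set $S$ can be simulated as a policy $\pi'$ for the modified problem satisfying the required-inspection constraint $\allocsub{i} \le \inspect{i}$ pointwise, with the same expected utility. Given $\pi$, define $\pi'$ to mimic $\pi$ on boxes in $T$ (inspecting and selecting exactly as $\pi$ does), and whenever $\pi$ selects a box $i \in S$ (necessarily without inspecting it, since $\pi$ never inspects boxes in $S$), let $\pi'$ first inspect box $i$ in the modified problem — which costs nothing and deterministically reveals the value $\E v_i$, hence provides no information and does not perturb any subsequent decision — and then select it. By construction $\pi'$ satisfies $\allocsub{i} \le \inspect{i}$ for boxes in $S$; by property (2) of committing policies it satisfies the same inequality for boxes in $T$. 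Moreover, $\pi'$ incurs the same costs as $\pi$ (the only modified costs are the zero costs of $S$-boxes) and obtains the same value (an unopened $S$-box in the original problem yields expected value $\E v_i$, matching the deterministic value of the modified box). Hence $\pi$ and $\pi'$ have equal expected utility.

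Now $\pi'$ is a policy for the modified box problem that obeys the required-inspection constraint, so by \Cref{cr:w_utility} its expected utility is at most that of Weitzman's optimal policy on the modified boxes. But $\Pols$ is precisely the policy that simulates Weitzman's optimal policy on the modified boxes (translating "inspect a box in $S$" back into "select it without inspecting"), and by the same accounting as above this translation preserves expected utility. Therefore the expected utility of $\pi$ is at most that of $\Pols$. Finally, $\Pols$ is itself a committing policy with reservation set $S$: it never inspects boxes in $S$ (property 1), it only selects a box in $T$ after Weitzman's policy has inspected it (property 2), and Weitzman's policy inspects boxes in a fixed order, which we take as the total ordering $\prec$ on $T$ (property 3). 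Hence $\Pols$ attains the maximum expected utility among all committing policies with reservation set $S$.
\end{proof}

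**The main obstacle** is making precise the claim that inspecting a modified $S$-box "provides no information and does not perturb subsequent decisions," so that the simulated policy $\pi'$ really is a legitimate adaptive policy with the stated utility; this is intuitively clear because the modified value is deterministic, but it must be stated carefully enough that \Cref{cr:w_utility} can be invoked. Everything else is bookkeeping: matching up costs, matching up values of unopened versus deterministic boxes, and checking the three defining properties of committing policies for $\Pols$.
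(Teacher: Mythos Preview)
Your proof is correct and follows essentially the same approach as the paper's: both map an arbitrary committing policy with reservation set $S$ to a required-inspection policy on the modified box set (replacing ``select uninspected $i\in S$'' by ``inspect-then-select''), invoke Weitzman's optimality on the modified instance, and then identify $\Pols$ with Weitzman's policy under the reverse translation.

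One small point the paper handles more explicitly: when you assert that the translation from Weitzman's policy on the modified boxes back to $\Pols$ preserves expected utility ``by the same accounting as above,'' you are implicitly relying on the fact that Weitzman's policy, upon inspecting a modified box $i\in S$, \emph{immediately selects it} (so that $\Pols$'s forced termination at that step does not discard any future value). The paper justifies this by observing that for a modified $S$-box the reservation value is $\sigma_i=\E v_i$ (since $c_i=0$ and the value is a point mass at $\E v_i$), hence the revealed value weakly exceeds all remaining reservation values and Weitzman stops. Your ``same accounting'' phrase does not quite cover this asymmetry between the two directions of translation, but once this observation is added your argument is complete. The additional verification that $\Pols$ is itself a committing policy with reservation set $S$ is a nice touch that the paper leaves implicit.
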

According to \Cref{lm:pols}, the optimal committing
policy must be one of the $2^n$ elements in the set
$\{\Pols \, : \, S \subseteq [n] \}$. In fact, it is
easy to see that the optimal committing policy must
belong to a much smaller set with just $n+1$ elements.
Define $\W$ to be the Weitzman's optimal policy
on the given (unmodified) set of $n$ boxes;
equivalently $\W = \Pol{\emptyset}$.
Also, for $i \in [n]$, define $\Poli = \Pol{\{i\}}$
to be the optimal committing policy with reservation
set $\{i\}$.
\begin{lemma} \label{lm:poli}
  The optimal committing policy always belongs to the
  set $\{\W,\Pol{1},\Pol{2},\ldots,\Pol{n}\}$.
\end{lemma}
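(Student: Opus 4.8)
The plan is to combine Lemma~\ref{lm:pols} with a short monotonicity argument: I would show that enlarging the reservation set beyond a single box can never help. By Lemma~\ref{lm:pols} the optimal committing policy is $\Pols$ for some $S \subseteq [n]$, and since $\Pol{\emptyset} = \W$ and $\Pol{\{i\}} = \Poli$, it suffices to prove that whenever $|S| \ge 2$ there is a \emph{singleton} $S'$ for which the expected utility of $\Pol{S'}$ is at least that of $\Pols$.

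First I would record a closed form for the expected utility of $\Pols$. In the modified instance of Definition~\ref{def:P_S} each box $i \in S$ has cost $0$ and value deterministically $\E v_i$, so its reservation value satisfies $\strike_i \ge \E v_i$ and hence $\covered_i = \min\{v_i,\strike_i\} = \E v_i$ identically. Since $\Pols$ reproduces Weitzman's operations on the boxes of $T$ and replaces ``inspect box $i \in S$'' by ``select box $i \in S$ unopened'' --- an exchange that preserves expected net contribution, because box $i$ is never inspected, so the indicator of selecting it is independent of $v_i$, whose mean is $\E v_i$ --- Corollary~\ref{cr:w_utility} applied to the modified instance shows that the expected utility of $\Pols$ equals
\[
  \E\!\left[\max\Bigl\{\ \max_{i \in S} \E v_i,\ \ \max_{i \notin S} \covered_i\ \Bigr\}\right],
\]
where an empty inner maximum is read as $-\infty$ (this only matters when $S = [n]$, where the formula is just $\max_i \E v_i$). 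This is, in fact, essentially contained in the deferred proof of Lemma~\ref{lm:pols}.

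Next I would carry out the reduction. Given $S$ with $|S| \ge 2$, I would pick $i^\star \in \argmax_{i \in S} \E v_i$ and set $S' = \{i^\star\}$. Using $[n] \setminus S' = (S \setminus \{i^\star\}) \cup ([n] \setminus S)$ and $\E v_i \le \E v_{i^\star}$ for all $i \in S$, the following holds pointwise in the prize realizations:
\begin{align*}
  \max\Bigl\{\max_{i \in S} \E v_i,\ \max_{i \notin S} \covered_i\Bigr\}
  &= \max\Bigl\{\E v_{i^\star},\ \max_{i \notin S}\covered_i\Bigr\} \\
  &\le \max\Bigl\{\E v_{i^\star},\ \max_{i \notin S'}\covered_i\Bigr\}.
\end{align*}
Taking expectations and applying the closed form above (to $S$ on the left, to $S'$ on the right), I conclude that $\Pol{S'}$ has expected utility at least that of $\Pols$, which finishes the proof.

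I do not expect a genuine obstacle. The only points needing care are the two ingredients of the closed form --- that a point-mass box contributes the constant $\E v_i$ to the $\covered$-vector regardless of the (non-unique) choice of its reservation value, and that swapping ``inspect the unopened box $i\in S$'' for ``select it unopened'' leaves the expected net contribution unchanged --- and both are routine. The crux is then the displayed pointwise inequality, which merely records that deleting from $S$ every box other than a maximal-mean one cannot decrease $\max_k \covered_k$: the retained box $i^\star$ already supplies the largest constant that the deleted boxes were contributing, and each deleted box $i$, once moved into $T$, can only ever push the maximum up.
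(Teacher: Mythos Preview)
Your proposal is correct and follows essentially the same approach as the paper: pick $i^\star \in \argmax_{i \in S}\E v_i$ and argue that shrinking the reservation set to $\{i^\star\}$ can only increase expected utility. The only difference is one of presentation --- you pass through the closed-form $\E[\max\{\max_{i\in S}\E v_i,\max_{i\notin S}\covered_i\}]$ from Corollary~\ref{cr:w_utility} and compare pointwise, whereas the paper argues directly at the policy level (selecting closed box $j\in S$ with $\E v_j \le \E v_{i^\star}$ is dominated by selecting closed box $i^\star$); both routes are short and equivalent.
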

\begin{lemma} \label{lm:compute-value}
  For any $S \subseteq [n]$, the expected utility
  of policy $\Pols$ can be computed in time
  $\operatorname{poly}(n,s)$, where $s$ is
  the maximum number of support points in any
  of the distributions $F_i$.
\end{lemma}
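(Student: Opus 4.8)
The plan is to first reduce the expected utility of $\Pols$ to a closed form in terms of ``covered'' random variables of a modified box problem, and then evaluate that form using polynomially many arithmetic operations. Recall from \Cref{def:P_S} that $\Pols$ simulates Weitzman's policy on the modified instance in which the boxes of $T := [n]\setminus S$ are unchanged while each box $i\in S$ is replaced by a deterministic box of value $\E v_i$ and inspection cost $0$. Let $\hat{\covered}_i$ be the covered variable of box $i$ in this modified instance, so $\hat{\covered}_i = \covered_i = \min\{v_i,\strike_i\}$ for $i\in T$ and $\hat{\covered}_i \equiv \E v_i$ for $i\in S$ (a deterministic box of value $\E v_i$ with zero cost has reservation value at least $\E v_i$, hence covered value exactly $\E v_i$). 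Applying \Cref{cr:w_utility} to the modified instance shows that Weitzman's policy there has expected utility $\E[\max_i \hat{\covered}_i]$ and always selects a box attaining that maximum. I would then argue that $\Pols$, run on the original instance, has the same expected utility: for boxes of $T$ it performs the identical operations, while for a box $i\in S$ it substitutes ``select box $i$ unopened'' for ``inspect box $i$'', and because the value revealed for box $i$ in the modified instance is $\E v_i = \E[v_i]$ and the true value $v_i$ of an unopened box is independent of everything $\Pols$ has done up to that point, this substitution leaves the expectation unchanged. This yields
\[
  \E[u(\Pols)] \;=\; \E\!\left[\max_i \hat{\covered}_i\right] \;=\; \E\!\left[\max\Bigl(m_S,\; \max_{i\in T}\covered_i\Bigr)\right],
\]
where $m_S := \max_{i\in S}\E v_i$, with the term $m_S$ dropped when $S=\emptyset$ and the inner maximum vacuous when $S=[n]$.

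The remaining task is to show the right-hand side is computable in time $\operatorname{poly}(n,s)$, which I expect to be routine. Each $\E v_i$, hence $m_S$, is an explicit sum over the support of $F_i$. For $i\in T$ the reservation value $\strike_i$ is obtained by exploiting that $t\mapsto \E[(v_i-t)^+]$ is continuous, non-increasing, and piecewise linear with breakpoints at the at most $s$ support points of $F_i$: evaluating it at these breakpoints identifies the linear piece on which it equals $c_i$, and $\strike_i$ solves a single linear equation on that piece. Then $\covered_i = \min\{v_i,\strike_i\}$ is a discrete variable on the at most $s$ points $\{\min(x,\strike_i): x\in\operatorname{supp}(F_i)\}$ with readily computed probabilities, the variables $\{\covered_i\}_{i\in T}$ are independent, so the distribution of $M := \max_{i\in T}\covered_i$ follows from $\Prob[M\le t] = \prod_{i\in T}\Prob[\covered_i\le t]$ by sorting the at most $ns$ candidate values and differencing, and finally $\E[u(\Pols)] = \sum_{t\in\operatorname{supp}(M)}\Prob[M=t]\,\max(m_S,t)$ is a sum of at most $ns$ terms. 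Every step uses only $\operatorname{poly}(n,s)$ arithmetic operations.

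The only genuinely delicate point is the first step, and within it the behavior of the simulation on boxes of $S$. The issue is that Weitzman's policy on the modified instance need not stop in the exact step in which it inspects a box of $S$ when several boxes share the same reservation value; I would resolve this by adopting the tie-breaking rule that ties in reservation value are broken in favor of inspecting the boxes of $S$ last, and then checking that under this rule $\Pols$ and the modified-instance Weitzman policy induce the same joint distribution of total inspection cost and identity of the selected box, modulo the harmless replacement of a selected $S$-box's deterministic payoff by the equal-in-mean, independent payoff of the corresponding unopened box. As a cross-check and an alternative route to the cost-accounting, one can note that $\Pols$ is non-exposed and invoke \Cref{lm:new-amort} with equality to obtain $\E[u(\Pols)] = \sum_i \E[\allocsubi\tilde{\covered}_i]$; since $\tilde{\covered}_i$ equals $\hat{\covered}_i$ whenever $\allocsubi = 1$ and exactly one box is ever selected, this reduces to $\E[\hat{\covered}_{i^\ast}]$ for the selected box $i^\ast$, after which it remains only to confirm that $i^\ast = \argmax_i\hat{\covered}_i$.
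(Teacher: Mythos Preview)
Your proposal is correct and takes essentially the same approach as the paper: reduce $u(\Pols)$ to $\E[\max_i \hat{\covered}_i]$ on the modified instance (the paper simply cites the proof of \Cref{lm:pols} for this step, which also disposes of your tie-breaking concern by observing that whenever Weitzman inspects a box $i\in S$ it finds $v_i\ge\sigma_i$ and selects it immediately), and then evaluate that expectation in $\operatorname{poly}(n,s)$ time via the product-of-CDFs formula over at most $ns$ breakpoints. Your write-up is more explicit about computing $\strike_i$ and about building the discrete distribution of the maximum, but the substance is identical.
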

Therefore, one can identify the optimal committing policy
in polynomial time by evaluating the expected utility
of each policy in the set $\{\W,\Pol{1},\Pol{2},\ldots,\Pol{n}\}$
and selecting the best of these $n+1$ alternatives.

\section{$1-\frac1e$ Approximation}\label{general}

In this section we analyze the worst-case ratio between
the value of the optimal committing policy and that of
the optimal policy.

\begin{theorem}\label{thm:general}
At least one of policies $\W$ and $\Poli$, $1 \le i \le n$,
achieves at least $1-\frac1e \approx 0.63$ of the optimal utility
for the box problem with nonobligatory inspection.
\end{theorem}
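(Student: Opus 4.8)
The plan is to reduce Pandora's problem with nonobligatory inspection to an instance of stochastic monotone submodular maximization with a matroid constraint, and then apply the Asadpour–Nazerzadeh adaptivity gap (\Cref{lm:AN}, \Cref{lem:fractional_non_adapt}) together with the generalized amortization lemma (\Cref{lm:new-amort}). First I would set up the ``paired-box'' / doppelganger reduction described in the introduction: for each original box $i$ create a pair consisting of box $i$ (with cost $c_i$, distribution $F_i$) and a doppelganger $i'$ with zero cost and deterministic value $\E v_i$. A policy for the paired problem must obey the probing constraint that at most one box in each pair is opened; this is a (partition) matroid constraint on the set of inspected elements. The key point is that a non-exposed policy for the paired problem that inspects box $i'$ in a pair corresponds exactly to a committing policy that puts $i$ in the reservation set $S$, and one that inspects box $i$ corresponds to keeping $i$ in $T$. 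Via \Cref{lm:new-amort} (applied with $\tilde\covered$), the expected utility of any policy for the original problem is at most $\E[\max_i \allocsubi \tilde\covered_i] = \E[\covered^{\M}]$ for the appropriate collection of capped random variables, with equality for non-exposed policies.

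Next I would cast the upper bound $\E[\max \text{ over the inspected, matroid-feasible set of capped values}]$ as $\E[f(\Theta^\pi)]$ for $f(x) = \max_j x_j$, which is monotone and submodular, with the partition matroid $\M$ from the pairing. The optimal adaptive policy's value is then bounded above by the optimal \emph{adaptive} value of this stochastic submodular maximization instance. By \Cref{lm:AN} (equivalently \Cref{lem:fractional_non_adapt}), there is a non-adaptive, matroid-feasible policy achieving at least $1-\tfrac1e$ of that adaptive optimum; moreover any point $y$ in the base polytope of $\M$ yields a deterministic non-adaptive policy with value at least $F(y)$. A non-adaptive matroid-feasible policy here is precisely a choice, for each pair, of which of the two boxes (if either) to open, together with an order — i.e.\ a committing policy in the sense of our definition. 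Running the amortization lemma in the reverse (equality) direction for the non-exposed version of this committing policy shows its true utility in the original problem equals its submodular value, hence is at least $(1-\tfrac1e)\,\mathrm{OPT}$. Finally, \Cref{lm:poli} says the best committing policy is one of $\W,\Pol{1},\dots,\Pol{n}$, so at least one of these $n+1$ policies achieves the bound, which is the statement of the theorem.

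I expect the main obstacle to be the bookkeeping at the interface between the three tools: making sure that (a) the probing-constraint matroid in the paired problem is genuinely a matroid and that a non-adaptive feasible policy for it really is a committing policy with an admissible order, (b) the function $f$ one feeds into Asadpour–Nazerzadeh, built from the capped variables $\covered$ and the deterministic doppelganger values $\E v_i$, is honestly monotone submodular on $\R^n_+$ in the sense of~\eqref{eq:submodular}, and (c) the amortization lemma is applied with the correct directionality — as an \emph{upper} bound on the arbitrary optimal policy and as an \emph{equality} for the constructed non-exposed committing policy, since a careless application could lose exactly the factor we are trying to control. A secondary subtlety is handling the early-stopping behaviour: a committing policy may terminate the search early, so one must check that the non-adaptive policy extracted from the submodular problem, when implemented with Weitzman-style stopping in the original problem (as in \Cref{def:P_S}), is non-exposed and loses nothing relative to its submodular value. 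Once these correspondences are pinned down, the derivation is, as the introduction promises, essentially automatic.
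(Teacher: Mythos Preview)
Your proposal is correct and follows essentially the same route as the paper: the paired-box reduction to a stochastic submodular maximization instance with a partition matroid, \Cref{lm:new-amort} as the upper bound on the optimal adaptive policy, \Cref{lm:AN} for the $1-\tfrac1e$ adaptivity gap, and \Cref{lm:poli} to reduce the resulting committing policy to one of $\W,\Pol{1},\ldots,\Pol{n}$. The obstacles you flag are exactly the content of the paper's two bridging lemmas (\Cref{lm:nonadapt} and \Cref{lm:adapt}), which handle, respectively, the equality direction for the non-exposed committing policy extracted from the non-adaptive solution and the inequality direction bounding $u(\pi)$ by the adaptive submodular value.
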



We establish a correspondence between the box problem and stochastic submodular
optimization. Recall from \Cref{sec:stochastic} that an instance of stochastic
submodular optimization is specified by a set of random variables
$\A = \{X_1, X_2, \cdots, X_m\}$, a submodular function
$f : \R^m_+ \to \R_+$, and a matroid $\M$ with ground set $\A$.
A policy $\pi$ chooses (either adaptively or non-adaptively)
a subset $I \subseteq \A$ of random variables whose values it
probes, subject to the constraint that $I$ must be an independent set
in $\M$. The value obtained when running policy $\pi$ is
the random variable $f(\Theta^\pi)$, where $\Theta^\pi$
denotes the random vector $(\theta_1,\ldots,\theta_m)$
specified by setting $\theta_i = X_i$ if $i \in I$ and
$\theta_i=0$ otherwise.

\begin{definition}\label{df:correspondence}[Associated Stochastic Optimization Problem]
Given an instance of Pandora's problem with nonobligatory inspection,
having $n$ boxes with costs $c_i$ and values $v_i \sim F_i$, the
{\em associated stochastic optimization problem} has $m = 2n$ random
variables denoted by
$$\A = \{X_{1,0}, X_{1,1}, X_{2,0}, X_{2,1}, \cdots, X_{n,0}, X_{n,1} \},$$
submodular objective function
\[
  f(\theta_{1,0},\theta_{1,1},\ldots,\theta_{n,0},\theta_{n,1})=
  \max \{ \theta_{i,j} \,:\, 1 \le i \le n, 0 \le j \le 1\},
\]
and matroid constraint $\M$ defined by the partition matroid whose
independent sets are all the subsets of $\A$ that contain at most
one element of each pair $\{X_{i,0}, X_{i,1}\}_{i=1}^{n}$.
The distributions of the random variables are defined as follows:
$X_{i,0}$ is drawn from the same distribution as $\covered_i$, whereas
$X_{i,1}$ is deterministically equal to $\E v_i$.
\end{definition}

Probing the first element of pair $(X_{i,0}, X_{i,1})$ in the
associated stochastic optimization problem corresponds to
inspecting box $i$ in the box problem. Probing the second
element of the pair corresponds to selecting box $i$ uninspected.
This correspondence is formalized by the following pair of
policy transformations.

\begin{definition}\label{df:policy-trans}
  Let $\mathcal{I}$ denote an instance of Pandora's problem
  with nonobligatory inspection, and let $\mathcal{J}$ denote
  the associated stochastic optimization problem.

  If $\pi$ is any (possibly adaptive)
  policy for Pandora's problem $\mathcal{I}$
  let $\Phi(\pi)$ denote the adaptive policy
  for $\mathcal{J}$ that simulates $\pi$ running
  in $\mathcal{I}$ and
  performs the following sequence of probes:
  whenever $\pi$ inspects box $i$, $\Phi(\pi)$
  probes $X_{i,0}$, and whenever $\pi$ stops and
  selects any box, $\Phi(\pi)$ probes every
  variable in the set $\{ X_{j,1} \, : \, j \in \unopened\}$,
  where $\unopened$ denotes the set of boxes
  in $\mathcal{I}$ that were uninspected at the moment
  when $\pi$ stopped.

  If $\rho$ is a non-adaptive policy for stochastic
  optimization problem $\mathcal{J}$
  and $B(\rho) \subset \A$ is the set of random variables
  that $\rho$ probes, let $S(\rho)$ denote the set of boxes
  $\{i \mid X_{i,0} \in B(\rho) \}$ and let $\Psi(\rho)$ denote the
  committing policy $\Pol{S(\rho)}$ for Pandora's problem
  $\mathcal{I}$.
\end{definition}

In the following lemmas, as in the preceding definition,
$\mathcal{I}$ denotes an instance of Pandora's problem
with nonobligatory inspection and $\mathcal{J}$ denotes
its associated stochastic optimization problem.
If $\pi$ is a policy for either problem $\mathcal{I}$ or
$\mathcal{J}$, we will use the notation $u(\pi)$ to
denote the expected utility of running policy $\pi$.
In the case of Pandora's problem this means
$u(\pi) = \E \left[ \sum_i (\allocsubi v_i - \inspecti c_i) \right]$.
In the case of the associated stochastic optimization
problem it means $u(\pi) = \E \left[ f(\Theta^{\pi}) \right]$.

\begin{lemma}\label{lm:nonadapt}
  If $\rho$ is a non-adaptive policy for $\mathcal{J}$
  and $\Psi(\rho)$ is the corresponding committing policy
  for $\mathcal{I}$, then
  \begin{equation} \label{eq:nonadapt}
    \max \{ u(\W), u(\Pol{1}), \cdots, u(\Pol{n}) \}
    \; \ge \;
    u(\Psi(\rho))
    \; \ge \;
    u(\rho) .
  \end{equation}
\end{lemma}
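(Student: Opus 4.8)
The plan is to establish the two inequalities in~\eqref{eq:nonadapt} separately, the second one being the substantive one. The first inequality, $\max \{ u(\W), u(\Pol{1}), \ldots, u(\Pol{n}) \} \ge u(\Psi(\rho))$, is immediate: by definition $\Psi(\rho) = \Pol{S(\rho)}$ is a committing policy with reservation set $S(\rho)$, and \Cref{lm:poli} says the optimal committing policy — and hence in particular $\Pol{S(\rho)}$ — is dominated by the best of $\W, \Pol{1}, \ldots, \Pol{n}$. So the real work is showing $u(\Psi(\rho)) \ge u(\rho)$, i.e., that the committing policy $\Pol{S(\rho)}$ for the box problem does at least as well as the non-adaptive probing policy $\rho$ for the associated stochastic optimization problem.

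First I would fix $S = S(\rho)$, so $\rho$ probes exactly the variables $\{X_{i,0} : i \in S\}$ together with some subset of $\{X_{i,1} : i \notin S\}$; in fact, since $f$ is monotone and the partition matroid allows adding any $X_{i,1}$ for $i \notin S$ for free, we may assume without loss of generality that $\rho$ probes $X_{i,1}$ for every $i \notin S$ (this only increases $u(\rho)$, making the inequality we want to prove harder, which is fine). Then $u(\rho) = \E[\max\{\,X_{i,0} : i \in S\} \cup \{X_{i,1} : i \notin S\}\,] = \E[\max(\{\covered_i : i \in S\} \cup \{\E v_i : i \notin S\})]$, using the distributions from \Cref{df:correspondence}. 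On the other side, $\Pol{S}$ is, by \Cref{def:P_S}, Weitzman's optimal policy run on the modified instance where each box $i \in S$ is replaced by a zero-cost deterministic box of value $\E v_i$ and boxes in $T = [n] \setminus S$ are unchanged. By \Cref{cr:w_utility}, the expected utility of Weitzman's policy on that modified instance equals $\E[\max_i \covered_i^{\mathrm{mod}}]$, where $\covered_i^{\mathrm{mod}}$ is the $\covered$-variable of box $i$ in the modified instance. For $i \in T$ the box is unchanged so $\covered_i^{\mathrm{mod}} = \covered_i$; for $i \in S$ the box has zero cost and deterministic value $\E v_i$, so its reservation value is $+\infty$ (or just $\E v_i$, since $(v - \sigma)^+ = 0 = c$ forces $\sigma \ge \E v_i$ and the min is $\E v_i$) and hence $\covered_i^{\mathrm{mod}} = \E v_i$ deterministically. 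Wait — I should double check: with deterministic value $\E v_i$ and cost $0$, the equation ${\E}[(v_i - \sigma_i)^+] = 0$ is satisfied by any $\sigma_i \ge \E v_i$, and $\covered_i = \min\{v_i, \sigma_i\} = \min\{\E v_i, \sigma_i\} = \E v_i$; good. Therefore $u(\Pol{S})$, which equals the utility of Weitzman's policy on the modified instance, is exactly $\E[\max(\{\covered_i : i \in T\} \cup \{\E v_i : i \in S\})]$, which is identical to the expression for $u(\rho)$ derived above. So in fact $u(\Psi(\rho)) = u(\rho)$ once we have made the WLOG reduction, and $u(\Psi(\rho)) \ge u(\rho)$ in general.

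The step I expect to require the most care is the WLOG claim that $\rho$ may be assumed to probe all of $\{X_{i,1} : i \notin S\}$: this needs monotonicity of $f$ (adding more probed coordinates never decreases the realized $f$-value, since $f$ is a max) together with the fact that the partition matroid constraint permits this augmentation — each pair $\{X_{i,0}, X_{i,1}\}$ with $i \notin S$ currently contributes at most one element to $B(\rho)$, and if it contributes none we add $X_{i,1}$, if it contributes $X_{i,1}$ already we do nothing; either way $S(\rho)$ is unchanged and independence is preserved. One should also confirm that $\Pol{S}$ as defined is genuinely the committing policy described — selecting each $i \in S$ uninspected when the Weitzman simulation "inspects" it — and that its expected utility coincides with that of Weitzman's policy on the modified instance, which is essentially the content of \Cref{lm:pols} combined with \Cref{cr:w_utility}. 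Beyond these bookkeeping points, the argument is a direct computation matching two expectations of the same maximum of the same random variables.
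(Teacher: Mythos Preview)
Your overall approach matches the paper's: invoke \Cref{lm:poli} for the first inequality, and for the second express both $u(\rho)$ and $u(\Pol{S(\rho)})$ as expectations of a maximum of independent random variables and compare them term by term. Your WLOG augmentation of $\rho$ (adding every unused $X_{i,1}$) is a minor variant of the paper's handling of the ``neither element probed'' case via $\tilde{\theta}_i=0$; both are fine.

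However, there is a genuine gap at the sentence ``which is identical to the expression for $u(\rho)$ derived above.'' Read your two displayed expressions again: you obtained
\[
u(\rho)=\E\Bigl[\max\bigl(\{\covered_i : i\in S\}\cup\{\E v_i : i\notin S\}\bigr)\Bigr]
\quad\text{and}\quad
u(\Pol{S})=\E\Bigl[\max\bigl(\{\covered_i : i\in T\}\cup\{\E v_i : i\in S\}\bigr)\Bigr],
\]
with $T=[n]\setminus S$. These have the roles of $S$ and $T$ interchanged; they are \emph{not} the same quantity, and your argument as written does not establish the desired inequality.

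The mismatch is not your computational error but stems from what appears to be a typo in \Cref{df:policy-trans}: since probing $X_{i,0}$ in $\mathcal J$ is meant to correspond to \emph{inspecting} box $i$ in $\mathcal I$, the boxes with $X_{i,0}\in B(\rho)$ should form the set $T$ of boxes the committing policy may open, and the reservation set should be $S(\rho)=\{i\mid X_{i,0}\notin B(\rho)\}$. (The paper's own proof is written consistently with this intended meaning, not with the stated definition.) With the corrected $S(\rho)$, your two expressions genuinely coincide after the WLOG step and your argument goes through verbatim. But you should have noticed the swap when you compared the two formulas rather than asserting they agree.
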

\begin{proof}
  Since $\Psi(\rho)$ is a committing policy, the
  inequality $\max \{ u(\W), u(\Pol{1}), \cdots, u(\Pol{n}) \}
  \ge u(\Psi(\rho))$ follows directly from
  \Cref{lm:poli}, so we focus on the inequality
  $u(\Psi(\rho)) \ge u(\rho)$ for the remainder
  of the proof.

  Couple the probability spaces of the two optimization
  problems such that when the prize inside box $i$ is
  $v_i$, the value of random variable $X_{i,0}$ equals
  $\covered_i = \min\{v_i, \strike_i\}$. Note that such
  a coupling exists, because the random variables
  $\{X_{i,0}\}_{i=1}^n$ are mutually independent and
  $X_{i,0}$ has the same marginal distribution as
  $\covered_i$ by construction.

  By construction, policy $\Psi(\rho) = \Pol{S(\rho)}$
  is non-exposed. According to \Cref{lm:amortization},
  then,
    \begin{equation}
      u(\Psi(\rho)) = \E[\max_i \tilde{\covered}_i],
      \label{eq:psi.1}
    \end{equation}
  where
  $\tilde{\covered}_i = \covered_i$ if $i \in S(\rho)$
  and $\tilde{\covered}_i = \E v_i$ if $i \not\in S(\rho)$.
  As for $u(\rho) = \E[f(\Theta^{\rho})]$, by the definition
  of $f$ and of $\Theta^{\rho}$ we have
    \begin{equation}
      u(\rho) = \E[\max_i \tilde{\theta}_i]
      \label{eq:psi.2}
    \end{equation}
  where $\tilde{\theta}_i = X_{i,0} = \covered_i$
  if $X_{i,0} \in B(\rho)$, $\tilde{\theta}_i = X_{i,1}
  = \E v_i$ if $X_{i,1} \in B(\rho)$, and
  $\tilde{\theta}_i = 0$ otherwise. In the
  former two cases $\tilde{\covered}_i = \tilde{\theta}_i$
  whereas in the third case $\tilde{\covered}_i \ge 0 = \tilde{\theta}_i$.
  Hence $\tilde{\covered}_i \ge \tilde{\theta}_i$ pointwise.
  Combining this inequality with~\eqref{eq:psi.1}-\eqref{eq:psi.2}
  and using the fact that the random variables $\{\tilde{\covered}_i\}_{i=1}^n$
  are mutually independent, as are $\{\tilde{\theta}_i\}_{i=1}^n$,
  the inequality $u(\Psi(\rho)) \ge u(\rho)$ follows.
%
%
\end{proof}
%

\begin{lemma}\label{lm:adapt}
  If $\pi$ is any (possibly adaptive) policy for Pandora's problem $\mathcal{I}$,
  and $\Phi(\pi)$ is the corresponding policy for the associated stochastic
  optimization problem, then $u(\Phi(\pi)) \ge u(\pi)$.
\end{lemma}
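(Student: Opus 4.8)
The plan is to mirror the proof of \Cref{lm:nonadapt}, but on the adaptive side: bound $u(\pi)$ from above using the generalized amortization lemma (\Cref{lm:new-amort}), and then bound the resulting quantity from above by $u(\Phi(\pi))$ via a pointwise comparison of random variables. First I would couple the probability spaces of $\mathcal I$ and $\mathcal J$ so that $X_{i,0} = \covered_i = \min\{v_i,\strike_i\}$ for every $i$ (legitimate since the $\covered_i$ are mutually independent with the correct marginals), and note that on this coupled space the run of $\Phi(\pi)$ is a deterministic function of the run of $\pi$. Writing $\inspecti,\allocsubi$ for the indicators of $\pi$'s run and letting $\tilde{\covered}_i$ be as in \Cref{lm:new-amort} (so $\tilde{\covered}_i=\covered_i$ when $\inspecti=1$ and $\tilde{\covered}_i=\E v_i$ when $\inspecti=0$), \Cref{lm:new-amort} and linearity of expectation give $u(\pi)=\sum_i \E[\allocsubi v_i-\inspecti c_i]\le \E\big[\sum_i \allocsubi \tilde{\covered}_i\big]$.

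The key step is the pointwise inequality $\sum_i \allocsubi \tilde{\covered}_i \le f(\Theta^{\Phi(\pi)})$. Since $\Phi(\pi)$ probes at most one variable from each pair $\{X_{i,0},X_{i,1}\}$, at least $n$ of the $2n$ coordinates of $\Theta^{\Phi(\pi)}$ are left at $0$, hence $f(\Theta^{\Phi(\pi)})\ge 0$; this covers every outcome in which $\pi$ selects no box, where the left-hand side equals $0$. Otherwise $\pi$ selects a unique box $i^*$ and the left side equals $\tilde{\covered}_{i^*}$. If $\pi$ inspected $i^*$, then $\Phi(\pi)$ probed $X_{i^*,0}$, whose realized value is $\covered_{i^*}=\tilde{\covered}_{i^*}$ under the coupling; if $\pi$ did not inspect $i^*$, then $i^*$ belongs to the set $\unopened$ of boxes still uninspected when $\pi$ halts, so $\Phi(\pi)$ probed $X_{i^*,1}=\E v_{i^*}=\tilde{\covered}_{i^*}$. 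In either case $f(\Theta^{\Phi(\pi)})\ge \tilde{\covered}_{i^*}$, which proves the claim. Taking expectations and chaining with the previous display yields $u(\pi)\le \E[f(\Theta^{\Phi(\pi)})]=u(\Phi(\pi))$.

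The subtlety I would flag — the one point where a careful reader should pause — is whether $\Phi(\pi)$ is genuinely an \emph{adaptive policy for $\mathcal J$}, i.e.\ whether its probe sequence depends only on the outcomes of its own probes. Probing $X_{i,0}$ reveals $\covered_i$, which pins down $v_i$ only on the event $\covered_i<\strike_i$; when $\covered_i=\strike_i$, the simulation needs the exact value of $v_i$ (which $\pi$'s continuation may use) but $\Phi(\pi)$ has learned only that $v_i\ge\strike_i$. This is handled in the standard way: on observing $X_{i,0}=\strike_i$, let $\Phi(\pi)$ draw a surrogate value for $v_i$ from its conditional law given $v_i\ge\strike_i$ and feed that to the simulated copy of $\pi$; since this substitution does not change the law of $(v_i)_i$, it changes neither the law of $\pi$'s run nor the distributions of $\tilde{\covered}_i$ and $f(\Theta^{\Phi(\pi)})$, so the coupling above goes through unchanged. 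It is also worth noting that one must invoke \Cref{lm:new-amort} rather than \Cref{lm:amortization} here, because an arbitrary $\pi$ may be exposed (it may inspect a box with $v_i>\strike_i$ and then decline to select it), so the hypothesis $\allocsubi\le\inspecti$ of the earlier lemma need not hold.
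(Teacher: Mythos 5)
Your proof is correct and follows essentially the same route as the paper: couple so that $X_{i,0}=\covered_i$, bound $u(\pi)\le\E\bigl[\sum_i\allocsubi\tilde{\covered}_i\bigr]$ via \Cref{lm:new-amort}, and observe that the right-hand side is at most $\E[f(\Theta^{\Phi(\pi)})]=u(\Phi(\pi))$ because at most one $\allocsubi$ is nonzero and the corresponding $\tilde{\covered}_i$ is among the probed coordinates. The subtlety you flag about the event $\covered_i=\strike_i$ --- where $\Phi(\pi)$ learns only $\min\{v_i,\strike_i\}$ and must resample a surrogate $v_i$ from the conditional law given $v_i\ge\strike_i$ to continue the simulation --- is a genuine technical point the paper's writeup passes over, and your resampling fix (which leaves the law of the probed set, and hence of $f(\Theta^{\Phi(\pi)})$, unchanged) is the right way to close it.
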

\begin{proof}
  As in the proof of \Cref{lm:nonadapt}, couple the probability spaces
  of the two optimization problems such that the value of the random
  variable $X_{i,0}$ equals $\covered_i = \min\{v_i,\strike_i\}$.
  By construction of policy $\Phi(\pi)$, the set of random variables
  it probes is
  $\{ X_{i,0} \mid \inspecti = 1 \} \cup \{ X_{i,1} \mid \inspecti = 0 \}$.
  Hence, if we define $\tilde{\covered}_i =
  \covered_i$ when $\inspecti=1$ and $\tilde{\covered}_i = \E v_i$
  when $\inspecti=0$, then we have
  \begin{equation} \label{eq:phi.0}
    u(\Phi(\pi)) = \E [ \max_i \tilde{\covered}_i ] \ge
    \sum_i \E [ \allocsubi \tilde{\covered}_i ] .
  \end{equation}
  \Cref{lm:new-amort} implies the following upper bound
  on $u(\pi)$.
  \begin{equation} \label{eq:phi.1}
    u(\pi) = \E \left[ \sum_i (\allocsubi v_i - \inspecti c_i) \right]
      \le \E \left[ \sum_i \allocsubi \tilde{\covered}_i \right]
  \end{equation}
  Combining this relation with inequality~\eqref{eq:phi.0}
  completes the proof.
\end{proof}

\begin{proof} [Proof of Theorem \ref{thm:general}]
  If $\pi$ denotes the optimal policy for an instance
  $\mathcal{I}$ of Pandora's problem with nonobligatory
  inspection, and $\mathcal{J}$ denotes the associated
  stochastic optimization problem, let $\rho$ denote
  an optimal non-adaptive policy for $\mathcal{J}$.
  We have the chain of inequalities
  \[
    \max \{ u(\W), u(\Pol{1}), \cdots, u(\Pol{n}) \}
    \ge
    u(\rho)
    \ge
    \left( 1 - \tfrac1e \right) \cdot
    u(\Phi(\pi))
    \ge
    \left( 1 - \tfrac1e \right) u(\pi)
  \]
  where the first inequality is \Cref{lm:nonadapt},
  the second is \Cref{lm:AN}, and the third
  is \Cref{lm:adapt}.
\end{proof}

\section{4/5 Approximation for Two Boxes}\label{sec:two_box}
In this section we show that for the case of two boxes, $n=2$, the best of policies $\W, \Pol1, \Pol2$ achieves at least $4/5$ utility of the optimal policy. We also provide a tight example for the approximation factor.

\begin{theorem}\label{thm:two_box}
At least one of policies $\W, \Pol1, \Pol2$ achieves at least $4/5$ utility of the optimal policy for the box problem with nonobligatory inspection in a setting with two boxes. This approximation factor is tight.
\end{theorem}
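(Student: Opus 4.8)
The plan is to show that for every instance there is a distribution over committing policies whose expected utility is at least $\tfrac45\,u(\pi^{*})$, where $\pi^{*}$ is the optimal adaptive policy; since $\max\{u(\W),u(\Pol1),\dots,u(\Pol n)\}$ dominates the utility of any distribution over committing policies (by \Cref{lm:poli} it also dominates $u(\Pol{\{1,2\}})$), this yields the approximation guarantee. First I would pin down the structure of $\pi^{*}$ for $n=2$. If $\pi^{*}$ opens no box it is dominated by $\Pol{\{1,2\}}$ and we are done with ratio $1$; otherwise relabel so that $\pi^{*}$ opens box~$1$ first. Conditioned on $v_1$, the optimal continuation is the pointwise maximum of $v_1-c_1$ (take box~$1$), $\E v_2-c_1$ (take box~$2$ unopened), and $v_1+\E[(v_2-v_1)^{+}]-c_1-c_2$ (open box~$2$, then take $\max\{v_1,v_2\}$); these are, respectively, affine of slope~$1$, constant, and concave of slope in $[0,1]$, and their pairwise crossings are monotone in $v_1$. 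Hence there are thresholds $0\le\tau\le\strike_2$, with $\E[(\tau-v_2)^{+}]=c_2$ and $\tau\le\E v_2\le\strike_2$, such that $\pi^{*}$ selects box~$2$ unopened when $v_1<\tau$, opens box~$2$ and selects $\arg\max\{v_1,v_2\}$ when $\tau\le v_1<\strike_2$, and selects box~$1$ when $v_1\ge\strike_2$. (If the first regime is empty then $\pi^{*}$ is a required-inspection policy and $u(\W)=u(\pi^{*})$; if $\E[(v_2-\E v_2)^{+}]<c_2$ the middle regime is empty and $\pi^{*}$ coincides with $\Pol{\{2\}}$; an empty last regime is covered by the bounds below with the corresponding term set to zero.)

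Next I would record the decomposition $u(\pi^{*})=p\,R_c+(1-p)\,R_o$, where $p=\Prob[v_1<\tau]$, $R_c=\E v_2-c_1$ is the conditional utility given a closed selection, and $R_o$ the conditional utility given an open selection. Coupling each $v_i$ with $\covered_i=\min\{v_i,\strike_i\}$ as in the proofs of \Cref{lm:nonadapt,lm:adapt}, and using throughout the independence of $v_1$ and $v_2$, I would establish three inequalities: (i) $u(\W)\ge u(\pi^{*})-p\,c_2$, by comparing $\W$ with the required-inspection policy that imitates $\pi^{*}$ but also opens box~$2$ on $\{v_1<\tau\}$ and using $\E[\max\{v_1,v_2\}\,\mathbf 1_{\{v_1<\tau\}}]\ge p\,\E v_2$; (ii) $u(\Pol{\{2\}})\ge\E v_2=R_c+c_1$, since $\Pol{\{2\}}$ may always select box~$2$ unopened; and (iii) a further bound $u(\Pol{\{2\}})\ge u(\pi^{*})-c$ where $c$ is a bounded multiple of the probability of the middle regime, obtained from the pointwise inequality $\max\{v_1,v_2\}-\max\{v_1,\E v_2\}\le(v_2-\E v_2)^{+}$ together with $\E[(v_2-\E v_2)^{+}]\ge c_2$ (valid in the non-degenerate case).

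I would then take the distribution that runs $\W$ with probability $\lambda$ and $\Pol{\{2\}}$ with probability $1-\lambda$, bound its expected utility below by a weighted sum $w_c(\lambda)\,R_c+w_o(\lambda)\,R_o$ using (i)--(iii), optimize $\lambda$ as a function of the instance, and thereby reduce the worst-case ratio to the infimum over instances of $\bigl(w_cR_c+w_oR_o\bigr)\big/\bigl(pR_c+(1-p)R_o\bigr)$. By scale-invariance this infimum is the minimum of a function of two real variables (essentially $p$ and the ratio $R_c/R_o$), and I would finish by a direct elementary minimization showing that the minimum equals $\tfrac45$. I expect the main effort to lie in the case bookkeeping for (i)--(iii) and the several degenerate regimes, and in carrying out the two-variable minimization carefully; none of this is conceptually deep, but the algebra is where a proof could easily go wrong.

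Finally, for the tightness claim I would exhibit a single explicit two-box instance --- with each $F_i$ supported on at most two points and with costs $c_1,c_2$ chosen to realize the minimizer of the bivariate function above --- and verify by direct computation that $u(\pi^{*})$, $u(\W)$, $u(\Pol1)$, $u(\Pol2)$ satisfy $\max\{u(\W),u(\Pol1),u(\Pol2)\}=\tfrac45\,u(\pi^{*})$, so that no committing policy does better than a $\tfrac45$-approximation on that instance.
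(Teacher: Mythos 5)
Your high-level plan matches the strategy the paper advertises in Section 1.1 (decompose $u(\pi^*)$ conditionally on closed vs.\ open selection, build a distribution over committing policies whose value is a weighted sum of the same two quantities, then minimize a two-variable ratio), and your structural characterization of the optimal two-box policy, the reuse of \Cref{lm:poli}, and the reduction-to-a-bivariate-minimization endgame are all sound. However, the specific lower bounds you propose do not obviously close the argument, and the places where they are weakest are exactly the places the paper's proof does something you omit.

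The paper's decomposition is $\mathrm{OPT}=y\,\E\max\{\covered'_1,\covered_2\}+(1-y)\,\E v_2$, where $\covered'_1$ is $\covered_1$ conditioned on $\covered_1\ge t$ and $t$ is chosen so that $\E\max\{t,\covered_2\}=\E v_2$. Working with the amortized variables $\covered_i$ (via \Cref{lm:amortization,lm:new-amort}) folds the inspection costs into the random variables; consequently $\E\max\{\covered'_1,\covered_2\}\ge\E v_2$ holds automatically, which is what lets the paper set $a=\E\max\{\covered'_1,\covered_2\}/\E v_2\ge 1$ and reduce to minimizing $\max\{1,(1-y)^2+ay\}/(1-y+ay)$, yielding $4/5$ at $a=2-y$, $y=1/2$. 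Your decomposition $u(\pi^*)=pR_c+(1-p)R_o$ with $R_c=\E v_2-c_1$ keeps the costs explicit, and that is where your bounds develop slack that is not controlled by $p$, $R_c$, $R_o$ alone. Concretely, bound (i) yields $u(\W)\ge u(\pi^*)-p\,c_2$ with an error term proportional to $c_2$, and bound (iii) yields $u(\Pol{\{2\}})\ge u(\pi^*)-c$ with $c$ bounded only via $\E[(v_2-\E v_2)^+]$; neither $c_2$ nor $\E[(v_2-\E v_2)^+]$ is a function of $(p,R_c,R_o)$, so when you take a $\lambda$-mixture of $\W$ and $\Pol{\{2\}}$ you cannot in general write the lower bound as $w_c(\lambda)R_c+w_o(\lambda)R_o$; an extra free parameter survives. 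Unless you supply the additional inequality (as the paper implicitly does through $\covered'_1$) tying the residual cost terms to $\E v_2$ and $\E\max\{\covered'_1,\covered_2\}$, the bivariate minimization you describe is actually a three- (or more-) variable minimization whose value you have not computed. Since you flag the algebra as the place ``a proof could easily go wrong,'' this is precisely where it does: the proposal as stated does not yet establish the $4/5$ bound. Also note that the paper's construction is not the $\{\W,\Pol{\{2\}}\}$ two-point mixture you suggest but the full multilinear fractional non-adaptive policy (it includes the singleton probes and the all-three term) followed by a further domination of $X_{1,0}$ by the $\{0,\covered'_1\}$-valued variable; that additional averaging is what makes the seven-term sum collapse to $(1-y)^2\E v_2+y\E\max\{\covered'_1,\covered_2\}$.

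One small but real point on tightness: the paper's tight instance is a limiting family indexed by $N\to\infty$, not a single finite two-point instance. If you want a \emph{single} instance with ratio exactly $4/5$ you would need to argue separately that the infimum is attained, which is not obvious from the minimization alone; the paper sidesteps this by exhibiting the sequence.
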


The proof supplies an upper bound on the optimal value by characterizing the optimal policy in the two-box case. Using ideas similar to those of Asadpour and Nazerzadeh~\cite{an}, given the optimal policy we consider a corresponding fractional non-adaptive policy. By comparing the better of the fractional non-adaptive policy and a policy that leaves all boxes uninspected, with the optimal policy we show that $4/5$ of the optimal is achievable.

\subsection*{Optimal Policy Characterization and Evaluation}

We first characterize the potential optimal policies in a problem with two boxes.
The following lemma summarizes some trivial observations, hence its proof is omitted.
\begin{lemma}
The optimal policy in the two-box problem with nonobligatory inspection falls into
one of three categories:
\begin{enumerate}
\item \label{case_open} it always selects an open box;
\item \label{case_closed} it always selects a closed box;
\item \label{case_sometimes} it sometimes selects an open
  box and sometimes a closed box.
\end{enumerate}
In case \ref{case_open}, the policy is equivalent to $\W$ with expected
utility equal to $\max_i \covered_i$.\\
In case  \ref{case_closed}, the expected utility equals
$\max_i \E v_i$. Suppose the equality holds for index $j$.
In this case $\max_i \E v_i \le \Pol{j}$.
\end{lemma}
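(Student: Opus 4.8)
The plan is to dispose of the three parts of the statement separately; each is a one-line consequence of material already in hand. I first observe that selecting no box is weakly dominated by leaving all boxes closed and selecting $\argmax_i \E v_i$ (assuming, as is standard, that $\max_i \E v_i \ge 0$), so I may take the optimal policy to always select some box. The trichotomy is then just a case split on the (random) inspection status of the selected box: it is inspected with probability one (case~\ref{case_open}), uninspected with probability one (case~\ref{case_closed}), or neither (case~\ref{case_sometimes}); these are mutually exclusive and exhaustive.

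For case~\ref{case_open}: if the optimal policy always selects an inspected box, it satisfies the required-inspection constraint $\allocsubi \le \inspecti$ pointwise, so \Cref{cr:w_utility} bounds its expected utility above by $\E[\max_i \covered_i]$. But $\W$ is itself a valid policy for the nonobligatory problem attaining $\E[\max_i \covered_i]$ exactly, so by optimality the bound is tight and the policy is value-equivalent to $\W$.

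For case~\ref{case_closed}: since the selected box is always uninspected, every inspection cost is pure loss, and the event $\{\allocsubi = 1\}$ is independent of $v_i$ because box $i$'s value is never observed when box $i$ is selected. Hence $u(\pi) = \E[\sum_i (\allocsubi v_i - \inspecti c_i)] \le \E[\sum_i \allocsubi v_i] = \sum_i \Pr[\allocsubi = 1]\,\E v_i \le \max_i \E v_i$, the last step using $\sum_i \allocsubi \le 1$ together with $\max_i \E v_i \ge 0$. Conversely the (category-\ref{case_closed}) policy that leaves all boxes closed and picks the highest expected value attains $\max_i \E v_i$, so optimality forces equality. Finally, if $j$ achieves this maximum, recall that $\Pol{j}$ runs Weitzman's policy on the two-box instance in which the other box $k$ is unchanged and box $j$ is replaced by a zero-cost deterministic box of value $\E v_j$; applying \Cref{cr:w_utility} to that instance gives $u(\Pol{j}) = \E[\max\{\covered_k, \E v_j\}] \ge \E v_j = \max_i \E v_i$, which is the asserted inequality (reading $\Pol{j}$ as $u(\Pol{j})$).

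There is no substantive obstacle here — the lemma genuinely records trivial observations, which is exactly why its proof is omitted in the main text. The only points deserving a moment's care are in case~\ref{case_closed}: one must remember that an optimal category-\ref{case_closed} policy is still allowed to inspect boxes wastefully, so the upper-bound argument has to cover such policies (it does, since inspection costs only subtract); and one must invoke the standing nonnegativity of $\max_i \E v_i$, equivalently dispatch the ``select nothing'' option at the outset, to rule out the degenerate regime in which doing nothing beats every uninspected selection.
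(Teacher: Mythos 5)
The paper explicitly declines to prove this lemma, calling it a record of ``trivial observations,'' so there is no argument in the paper to compare against; your filled-in proof is correct and is the natural one. Case~\ref{case_open} is handled exactly as it should be: a category-\ref{case_open} policy satisfies $\allocsubi \le \inspecti$ pointwise, so \Cref{cr:w_utility} bounds it above by $\E[\max_i \covered_i]$, and $\W$ attains that, forcing equality for the optimum. In case~\ref{case_closed} your independence step ($\allocsubi$ is independent of $v_i$ when box $i$ can only be selected unopened), the chain $u(\pi) \le \sum_i \E[\allocsubi]\,\E v_i \le \max_i \E v_i$, and the lower bound $u(\Pol{j}) = \E\bigl[\max\{\covered_k,\E v_j\}\bigr] \ge \E v_j$ are all sound. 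The two points you flag are worth flagging: the trichotomy is exhaustive and the case-\ref{case_closed} bound is tight only under the standing assumption $\max_i \E v_i \ge 0$ (equivalently, that ``take nothing'' is dominated), which the paper never states but has implicitly baked in via the reduction to stochastic submodular optimization over $\R_+^{2n}$; and the upper bound in case~\ref{case_closed} must indeed cover policies that inspect wastefully, which your argument does, even though an \emph{optimal} two-box category-\ref{case_closed} policy would never inspect at all.
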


The best of $\W, \Pol1$ and $\Pol2$ achieves the optimal value in cases \ref{case_open} and \ref{case_closed}. Therefore we only need to show that the approximation holds for case \ref{case_sometimes} where the optimal policy starts with inspecting a box. Without loss of generality, suppose that the optimal policy starts with inspecting box $1$.

\begin{lemma}\label{lm:threshold}
If the optimal policy starts with inspecting box $1$, it selects box $2$ without inspecting it only if $\covered_1$ is less than threshold $t$ where $t$ is the solution to $\E v_2 = \E \max\{t, \covered_2\}$.
\end{lemma}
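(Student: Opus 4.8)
I want to show that in case \ref{case_sometimes}, with the optimal policy $\pi$ starting by inspecting box $1$, the decision of whether to select box $2$ unopened after observing $v_1$ is governed by a threshold on $\covered_1 = \min\{v_1,\strike_1\}$: namely, $\pi$ selects box $2$ unopened only when $\covered_1 < t$, where $t$ solves $\E v_2 = \E\max\{t,\covered_2\}$. The natural approach is a comparison argument: after box $1$ has been inspected and $v_1$ (hence $\covered_1$) is known, the policy faces a two-way choice — either (a) stop and take box $2$ unopened, earning $\E v_2$ from here on (since $v_1$ is a sunk payoff already available via recall, the relevant comparison is between the continuation values), or (b) do anything else, the best such option being to run Weitzman optimally on box $2$ alone while retaining box $1$ as an outside option worth $v_1$. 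Since with recall the agent always has $v_1$ in hand, it is cleaner to compare the two continuation rewards measured relative to the guaranteed payoff $v_1$: option (a) yields (a net change of) $\E v_2 - v_1$ if $v_2 > v_1$-type reasoning is folded in correctly, so I will instead phrase everything through $\covered_1$.

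First I would invoke \Cref{cr:w_utility}: the optimal way to proceed on box $2$ alone, given that box $1$ is already inspected and available, is Weitzman's policy on box $2$, whose value (the max of the box-1 outside option, encoded as $\covered_1$ after the amortization, and $\covered_2$) is $\E\max\{\covered_1,\covered_2\}$; this uses that after inspecting box $1$ the residual subproblem is a fresh Pandora instance with one closed box (box $2$) and one already-known value, where the known value $v_1$ contributes to utility exactly as $\covered_1$ does in the amortized accounting of \Cref{lm:amortization}. Option (a), selecting box $2$ unopened, gives amortized continuation value $\max\{\covered_1, \E v_2\}$ — we keep box $1$'s amortized value $\covered_1$ as the fallback and otherwise collect $\E v_2$. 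So the optimal policy selects box $2$ unopened precisely when $\max\{\covered_1,\E v_2\} \ge \E\max\{\covered_1,\covered_2\}$, i.e. (splitting on whether $\covered_1 \ge \E v_2$) when $\E v_2 \ge \E\max\{\covered_1,\covered_2\}$ together with $\covered_1 < \E v_2$ — and since the right-hand side $\E\max\{\covered_1,\covered_2\}$ is nondecreasing and continuous in $\covered_1$, equality $\E v_2 = \E\max\{t,\covered_2\}$ defines a unique threshold $t$ (note $t \le \E v_2$ automatically, and $t \ge $ the essential infimum of $\covered_2$), and the condition becomes simply $\covered_1 < t$.

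The step I expect to be the main obstacle is justifying rigorously that, conditioned on the history up to the moment box $1$ is inspected and on the realized $v_1$, the continuation value of any policy that does \emph{not} immediately select box $2$ unopened is at most $\E\max\{\covered_1,\covered_2\}$ — i.e. that \Cref{cr:w_utility} applies to the residual subproblem with the already-inspected box $1$ correctly incorporated. This requires a small lemma that a Pandora instance with one box already opened to value $v_1$ and one closed box behaves, in the amortized accounting, like an instance where box $1$'s contribution is the constant $\covered_1$; this follows from \Cref{lm:amortization} applied conditionally, since $\covered_1$ is already determined so $\E[\allocsub{1}\covered_1 \mid v_1] = \covered_1\,\E[\allocsub{1}\mid v_1]$ and the remaining box $2$ contributes at most $\E[\allocsub{2}\covered_2]$, with the selection constraint $\allocsub{1}+\allocsub{2}\le 1$ making the total at most $\E\max\{\covered_1,\covered_2\}$, achieved by Weitzman. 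Everything else — uniqueness and the bound $t \le \E v_2$ of the threshold, and the final translation of the optimality inequality into "$\covered_1 < t$" — is a short monotonicity-and-continuity check on the scalar function $x \mapsto \E\max\{x,\covered_2\}$, which I would not belabor.
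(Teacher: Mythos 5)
Your proof follows the same basic strategy as the paper's one-line argument: after box $1$ is inspected, compare the payoff of selecting box $2$ unopened with the Weitzman-style continuation value, and use the monotonicity and continuity of $x \mapsto \E\max\{x,\covered_2\}$ to extract the threshold $t$ from the equation $\E v_2 = \E\max\{t,\covered_2\}$. That much matches the paper.

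However, two steps in your write-up are off. First, the quantity $\max\{\covered_1, \E v_2\}$ is not the (amortized) value of option~(a) ``stop and select box~$2$ unopened''; that action pays exactly $\E v_2$, as you correctly state in your opening paragraph before switching. What $\max\{\covered_1,\E v_2\}$ describes is a different action --- stop searching without ever opening box $2$ and take the better of box $1$ or unopened box $2$. Your intermediate claim that the optimal policy selects box $2$ unopened ``precisely when $\max\{\covered_1,\E v_2\}\ge \E\max\{\covered_1,\covered_2\}$'' is therefore false: if $\covered_2\le\covered_1$ almost surely and $\covered_1\ge\E v_2$, the inequality holds, yet the optimal action is to keep box $1$, not to take box $2$ unopened. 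You then append the extra condition $\covered_1<\E v_2$, which rescues the conclusion, but at that point the derivation no longer follows from the stated ``precisely when.'' Second, and more substantively, the ``small lemma'' you flag as the main obstacle --- that, conditional on the realized $v_1$, the continuation value of any policy that does \emph{not} take box $2$ unopened is \emph{at most} $\E\max\{\covered_1,\covered_2\}$ --- is false, and the direction of the inequality matters. Conditional on $v_1$, the best Weitzman-style continuation is worth $\E\max\{v_1,\covered_2\}$ (apply \Cref{cr:w_utility} treating box~$1$ as a known-value, zero-cost box so that its $\covered$ equals $v_1$), which is \emph{at least} $\E\max\{\covered_1,\covered_2\}$, with strict inequality whenever $v_1>\strike_1$; the amortization replaces $v_1$ by $\covered_1$ only in expectation over $v_1$, not conditionally. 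For the lemma's actual claim (an ``only if''), what you need is precisely this lower bound: if $\covered_1\ge t$ then $\E\max\{v_1,\covered_2\}\ge\E\max\{\covered_1,\covered_2\}\ge\E\max\{t,\covered_2\}=\E v_2$, so the policy does not select box $2$ unopened. The upper bound you assert is the wrong direction and, if you tried to use it for the converse implication, the argument would genuinely fail, because $\covered_1<t$ does not imply $v_1<t$. In short: right target, right basic comparison, but the direction of the amortization bound needs to be flipped, and the extraneous $\max\{\covered_1,\E v_2\}$ comparison should be replaced by the plain comparison of $\E v_2$ against the Weitzman continuation value.
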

\begin{proof}
Consider the realized $\covered_1$. The agent has the option to choose between $\E v_2$ (the value of selecting box $2$ without inspecting it) and $\E \max(\covered_1, \covered_2)$ (the value of emulating Weitzman's policy). To maximize the expected value, $\E v_2$ is chosen only if $\kappa_1 \le t$.
\end{proof}

Let $y = \Prob (\covered_1 \ge t) $. The optimal policy achieves utility $\E \left[ \max \{ \covered_1 , \covered_2 \} \cdot \indic{\covered_1\ge t} \right] + (1-y) \E v_2$.

\begin{lemma}
In the optimal policy that starts with inspecting box $1$ and
selects uninspected open box 2 with probability $1-y$,
the expected utility achieved is
$\E \left[ \max \{ \covered_1 , \covered_2 \} \cdot \indic{\covered_1\ge t} \right] + (1-y) \E v_2$.
Let $\covered'_1$ be a random variable distributed according to the conditional distribution of
$\covered_1$ given the event $\covered_1 \ge t$. Then the expected utility is
\begin{align}\label{opt_formula}
OPT = y \E \max \{ \covered'_1, \covered_2 \} + (1-y) \E v_2 .
\end{align}
\end{lemma}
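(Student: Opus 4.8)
The plan is to evaluate the policy's expected utility by conditioning on the two events distinguished by \Cref{lm:threshold}. Write $\pi$ for the optimal policy, which (after relabeling) inspects box~$1$ first, and set $A=\{\covered_1<t\}$ and $B=\{\covered_1\ge t\}$, so that $\Prob(B)=y$. Combining \Cref{lm:threshold} with the standing hypothesis that $\pi$ selects box~$2$ uninspected with probability $1-y=\Prob(A)$ forces $\pi$ to do so exactly on $A$. On $B$ the policy therefore never selects a closed box, so its only remaining freedom concerns box~$2$ under required inspection; since there is nothing to gain by postponing, on $B$ the policy runs Weitzman's optimal required-inspection procedure on box~$2$ with the realized value $v_1$ serving as the recall option (keep box~$1$ if $v_1\ge\strike_2$, otherwise inspect box~$2$ and keep the larger of $v_1,v_2$).

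To finish I would invoke the amortization lemma. First observe that, since box~$1$ is inspected first, $\pi$ is non-exposed exactly when $\strike_1\ge\strike_2$: box~$2$ is only ever inspected on $\{v_1<\strike_2\}$, so $v_2>\strike_2$ implies $v_2>v_1$ and box~$2$ is kept; and box~$1$ is kept whenever $v_1>\strike_1$ precisely when that event forces $v_1\ge\strike_2$ (note $v_1>\strike_1$ puts us on $B$, as $t\le\strike_1$), i.e.\ precisely when $\strike_1\ge\strike_2$. Assuming this, \Cref{lm:new-amort} holds with equality, so $u(\pi)=\sum_i\E[\allocsubi\tilde{\covered}_i]$ with $\tilde{\covered}_i=\covered_i$ when box~$i$ is inspected and $\tilde{\covered}_i=\E v_i$ otherwise. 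Since exactly one box is selected, I evaluate $\sum_i\allocsubi\tilde{\covered}_i$ by a three-way split on $\pi$'s terminal action: on $A$ the policy takes box~$2$ closed, and the sum is $\tilde{\covered}_2=\E v_2$; on $B\cap\{v_1\ge\strike_2\}$ it takes box~$1$, and the sum is $\covered_1=\min\{v_1,\strike_1\}\ge\strike_2\ge\covered_2$, hence equals $\max\{\covered_1,\covered_2\}$; on $B\cap\{v_1<\strike_2\}$ it inspects box~$2$ and keeps the larger value, so the sum is $\indic{v_1\ge v_2}\covered_1+\indic{v_1<v_2}\covered_2=\max\{\covered_1,\covered_2\}$, using that $\covered_1=v_1$ here (as $v_1<\strike_2\le\strike_1$). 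Taking expectations over the three cases gives $u(\pi)=\E[\max\{\covered_1,\covered_2\}\,\indic{\covered_1\ge t}]+(1-y)\E v_2$, as claimed.

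Dropping the hypothesis $\strike_1\ge\strike_2$, the same bookkeeping (now tracking the always-incurred cost $c_1$ explicitly) instead yields $u(\pi)=\E[\max\{v_1,\covered_2\}\,\indic{\covered_1\ge t}]+(1-y)\E v_2-c_1$; since $(v_1-\strike_1)^+=0$ on $A$ we have $c_1=\E[(v_1-\strike_1)^+]=\E[(v_1-\strike_1)^+\indic{\covered_1\ge t}]$, while $\max\{v_1,\covered_2\}-\max\{\covered_1,\covered_2\}\le v_1-\covered_1=(v_1-\strike_1)^+$ pointwise, so the displayed expression is in every case an upper bound on $u(\pi)$ — which is all that the subsequent argument needs. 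Finally, the passage to the form \eqref{opt_formula} is immediate: $\covered_1$ and $\covered_2$ are independent, so $\E[\max\{\covered_1,\covered_2\}\indic{\covered_1\ge t}]=\Prob(\covered_1\ge t)\cdot\E[\max\{\covered_1,\covered_2\}\mid\covered_1\ge t]=y\,\E\max\{\covered'_1,\covered_2\}$, with $\covered'_1$ and $\covered_2$ independent and $\covered'_1$ distributed as $\covered_1$ conditioned on $\{\covered_1\ge t\}$.

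The one genuinely non-routine point is the claim that $\pi$ may be taken to inspect first the box of larger reservation value (equivalently, that $\pi$ is non-exposed), which is exactly what makes the amortization identity an equality rather than an inequality; I would get this from an exchange argument comparing the two inspection orders, but if that turns out to be delicate it can be dispensed with entirely, since the displayed quantity bounds $OPT$ from above regardless. Everything else is routine manipulation of the amortization identity and of $\min/\max$ expressions.
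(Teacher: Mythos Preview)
Your proof is correct and considerably more detailed than what the paper provides: the paper states this lemma without proof, treating the formula as an immediate consequence of \Cref{lm:threshold} together with \Cref{cr:w_utility} (``the value of emulating Weitzman's policy is $\E\max\{\covered_1,\covered_2\}$''). Your argument unpacks that step via the amortization identity of \Cref{lm:new-amort}, which is exactly the machinery the paper relies on elsewhere, so the approach is the same in spirit.

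You also surface a point the paper leaves implicit: the equality $u(\pi)=\E[\max\{\covered_1,\covered_2\}\indic{\covered_1\ge t}]+(1-y)\E v_2$ uses non-exposure of $\pi$, which in turn requires $\strike_1\ge\strike_2$. The paper never states this hypothesis; it is tacit in the phrase ``emulating Weitzman's policy'' (Weitzman would open box~1 first only when $\strike_1\ge\strike_2$). Your fallback observation --- that without $\strike_1\ge\strike_2$ the displayed quantity is still an \emph{upper bound} on $u(\pi)$, and an upper bound on $OPT$ is all the downstream ratio argument needs --- is both correct and a genuine improvement in rigor over the paper's presentation. The exchange argument you allude to can indeed be made to show the optimal first inspection is the box of larger reservation value, but as you note it is not needed.
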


\subsection*{Lower Bound on the Optimal Non-adaptive Policy}

Let $\NonAdapt$ be a fractional non-adaptive policy (defined in \Cref{multi-linear}) that inspects each element with the marginal probabilities of its inspection in the optimal policy. For our case, in pair $1$, the first element is inspected with probability $1$ and the second element with probability $0$. In pair $2$, the first element is inspected with probability $y$ and the second element with probability $1-y$. Since the probability of inspection of elements of each pair sums to $1$, $\NonAdapt$ belongs to the base polytope
of the partition matroid.

Consider a modified random variable for the first element of pair $1$ with a dominated distribution. Let this random variable be $0$ with probability $1-y$ and $\covered'_1$ with probability $y$, where $\covered'_1$ is a random variable distributed according to the conditional distribution of
$\covered_1$ given the event $\covered_1 \ge t$.
Due to the independence of random variables in non-adaptive policies and the monotonicity of maximization, this modification results in a fractional non-adaptive policy with a (weakly) lower value. Since value $0$ has no effect in maximizing non-negative numbers, we can consider the following modified realizations for our lower bound on the fractional non-adaptive policy: random variables $\covered'_1$, $\covered_2$ and $\E v_2$ are inspected with probabilities $y$, $y$, and $1-y$ respectively.

By Formula \eqref{multi-linear}, for the expected value of $\NonAdapt$ we have:
\begin{align*}
\NonAdapt(y) \ge &(1-y)^3 \E v_2 \\
+ &y^2(1-y) \E\covered'_1 \\
+ &y^2(1-y) \E\covered_2 \\
+ &y^3 \E\max\{\covered'_1, \covered_2\} \\
+ & y(1-y)^2 \E\max\{\E v_2, \covered'_1\} \\
+ & y(1-y)^2 \E\max\{\E v_2, \covered_2\} \\
+ & y^2(1-y) \E\max\{\E v_2, \covered'_1, \covered_2\}. \\
\end{align*}

Using $\E\covered'_1  + \E\covered_2 \ge \E\max \{\covered'_1, \covered_2\}$, for the first four terms we have:
\begin{align*}
(1-y)^3 \E v_2
+ y^2(1-y) \E\covered'_1
+ y^2(1-y) \E\covered_2
+ y^3 \E\max \{\covered'_1, \covered_2\}
 \ge (1-y)^3\E(v_2) + y^2 \E \max \{\covered'_1, \covered_2\}.
\end{align*}

Using the same argument, for the last three terms we have:
\begin{align*}
  y(1-y)^2 \E\max\{\E v_2, \covered'_1\}
+  y(1-y)^2 \E\max\{\E v_2, \covered_2\}
+  y^2(1-y) \E\max\{\E v_2, \covered'_1, \covered_2&\} \\
\ge [y(1-y)+y(1-y)^2]\E v_2 + y(1-y)\E[ \max\{\covered'_1, \covered_2\}-\E v_2&]^+ \\
\ge y(1-y)^2\E v_2 + y(1-y) \E\max\{\covered'_1, \covered_2&\}.
\end{align*}

Therefore
\begin{align}\label{ineq_1}
\NonAdapt(y) \ge (1-y)^2 \E v_2 + y\E\max\{\covered'_1,\covered_2\}.
\end{align}

Another valid non-adaptive policy is $\Pol{2}$ with value at least $ \E v_2$.
\begin{align}\label{ineq_2}
\Pol{2} \ge \E v_2
\end{align}

Inequalities \ref{ineq_1}, \ref{ineq_2} and \Cref{lm:poli} imply:
$$\max\{u(\W), u(\Pol{1}), u(\Pol{2})\} \ge \max\{\E v_2, (1-y)^2 \E v_2 + y\E\max(\covered'_1,\covered_2)\}$$

\subsection*{Comparing the Optimal Adaptive and Non-Adaptive Policies}

We compare the lower bound on the optimal non-adaptive policy,
$\max \{ \E v_2, (1-y)^2 \E v_2 + y\E\max(\covered'_1,\covered_2) \},$
with the utility of the optimal policy from \Cref{opt_formula},
$(1-y)\E v_2 + y\E\max(\covered'_1,\covered_2)$,
and show that the ratio is at least $\frac{4}{5}$. Note that by \Cref{lm:threshold}, $\E\max(\covered'_1,\covered_2) \ge \E v_2$.
Let
$\E\max(\covered'_1,\covered_2) =a \E v_2$
where $a \ge 1$.
We have:
  \begin{align}
	\frac{\max \{ \E v_2, (1-y)^2 \E v_2 + y\E\max(\covered'_1,\covered_2) \}}{(1-y)\E v_2 + y\E\max(\covered'_1,\covered_2)}  &=\\
    \frac{\max(1,(1-y)^2+ay)}{1-y+ay} &\ge
    \begin{cases}
      \frac{1}{1-y+ay}, & \text{if}\ a \le 2-y \\
      \\
      \frac{(1-y)^2+ay}{1-y+ay}, &  \text{if}\ a \ge 2-y.
    \end{cases}
  \end{align}

The formula for the first part is decreasing in $a$ for a fixed $y$ and achieves its minimum at $a=2-y$. The formula for the second part is increasing $a$ with fixed $y$ and therefore achieves its minimum at $a=2-y$. Therefore the maximum ratio occurs at $a=2-y$ and is equal to:
\[ \frac{1}{1-y+(2-y)y} = \frac{1}{1+y(1-y)}.\]
Since $0 \le y \le 1$,
$$\frac{1}{1+y(1-y)}\le \frac{1}{1+\frac{1}{4}}=\frac45.$$
This concludes the proof of \Cref{thm:two_box}.

The following is a tight example for \Cref{thm:two_box}.
\begin{example}
Consider boxes A and B.
Suppose box A has value $0$ with probability $\frac12$ and value $1$ with probability $\frac12$, and its inspection cost is $0$.
Box B has value $0$ with probability $1-\frac1N$ and value $N$ with probability $\frac1N$; and its inspection cost is $\frac{N-1}{2N}$.

The optimal policy starts with inspecting box A, and if the value is 0, selects uninspected box B. If the value of box A is 1, the optimal policy inspects box B and takes the maximum value of the two boxes. The expected utility of this policy is
\begin{align*}
& \frac12 \cdot \frac1N\cdot N +\frac12  \left( - \frac{N-1}{2N}+ \frac1N\cdot N + \left(1-\frac1N \right) \cdot 1 \right) \\
= & \frac12+\frac12  \left( \frac32-\frac1{2N} \right)
\end{align*}
which approaches $ \frac54 $ as $N$ goes to infinity.

Policies $\Pol{1}$, $\Pol{2}$ and $\W$ each achieve utility $1$:
Policy  $\W$, inspects both boxes and obtains the maximum value. The expected utility in this case is $- \frac{N-1}{2N} +  \frac1N\cdot N + (1-\frac1N)\cdot \frac12$.
Policy $\Pol{1}$ starts by inspecting box B. If box B has value $N$, it selects it. Otherwise it selects uninspected box A. Therefore it has utility $- \frac{N-1}{2N} +  \frac1N\cdot N + (1-\frac1N)\cdot \frac12$.
Policy $\Pol{2}$ inspects box A. If the value is 0, it selects uninspected box B. If the value of box A is 1, it is indifferent between selecting box A and uninspected box B. The expected utility in this case is $\frac12\cdot \frac1N \cdot N + \frac12 \cdot 1$.
\end{example}

\bibliographystyle{plain}
\bibliography{pandora}

\appendix
\section{Omitted Proofs Concerning Committing Policies}
\label{sec:committing-omitted}

In this section we reiterate
and prove \Cref{lm:pols,lm:poli,lm:compute-value},
which concern the structure and computation of optimal committing
policies.

\begin{lemma}[\Cref{lm:pols} restated]
  For every $S \subseteq [n]$, policy $\Pols$ attains the highest
  expected utility among
  all committing policies with reservation set $S$.
\end{lemma}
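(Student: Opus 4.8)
The plan is to reduce \Cref{lm:pols} to Weitzman's theorem in the form of \Cref{cr:w_utility}, applied to the modified instance used to define $\Pols$ in \Cref{def:P_S}. Write $T=[n]\setminus S$, and let the \emph{modified instance} be the one in which each box $i\in S$ has inspection cost $0$ and value deterministically equal to $\E v_i$, while every box in $T$ is unchanged; let $\covered^{S}_i$ denote the analogue of $\covered_i$ for this instance. A one-line computation gives $\covered^{S}_i=\covered_i$ for $i\in T$ and $\covered^{S}_i=\E v_i$ for $i\in S$ (a deterministic box of value $\E v_i$ with zero cost has reservation value $\E v_i$). By \Cref{cr:w_utility}, Weitzman's policy on the modified instance obtains expected utility $\E[\max_i \covered^{S}_i]$ and no required-inspection policy on that instance exceeds it. So it suffices to prove two things: every committing policy with reservation set $S$ has value at most $\E[\max_i\covered^{S}_i]$, and $\Pols$ attains exactly $\E[\max_i\covered^{S}_i]$.

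For the upper bound, I would take an arbitrary committing policy $\pi$ with reservation set $S$ and invoke \Cref{lm:new-amort}, which gives $u(\pi)\le\sum_i\E[\allocsubi\tilde{\covered}_i]$ with $\tilde{\covered}_i=\covered_i$ on $\{\inspecti=1\}$ and $\tilde{\covered}_i=\E v_i$ on $\{\inspecti=0\}$. The committing structure makes this sum collapse: for $i\in S$ the policy never inspects $i$, so $\allocsubi\tilde{\covered}_i=\allocsubi\E v_i=\allocsubi\covered^{S}_i$; for $i\in T$ the term equals $\allocsubi\covered_i=\allocsubi\covered^{S}_i$ when $\inspecti=1$ and vanishes when $\inspecti=0$ because then $\allocsubi=0$. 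Since at most one of the $\allocsubi$ is nonzero, $\sum_i\E[\allocsubi\covered^{S}_i]\le\E[\max_i\covered^{S}_i]$, so $u(\pi)\le\E[\max_i\covered^{S}_i]$. (Equivalently, one can just transform $\pi$ into a required-inspection policy on the modified instance of the same expected utility, replacing every ``select $i\in S$ without inspecting'' by the free pair of operations ``inspect $i$, then select $i$'', and then quote \Cref{cr:w_utility} directly.)

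For the matching bound, I would first verify that $\Pols$ is indeed a committing policy with reservation set $S$: it never inspects a box of $S$ (it selects such a box uninspected instead), it never selects a box of $T$ before inspecting it (because the simulated Weitzman policy never does), and it inspects the boxes of $T$ in the fixed order of decreasing reservation value $\strike_j$, which serves as the order $\prec$. Then I would show $u(\Pols)=\E[\max_i\covered^{S}_i]$ by coupling the original instance to the modified one along the values of the $T$-boxes and matching the two runs operation by operation. The delicate point --- and the step I expect to be the main obstacle --- is the behaviour at a box $i\in S$: since Weitzman's policy processes boxes in decreasing order of reservation value, at the moment it is about to inspect $i$ the best value it has seen so far is strictly below $\strike^{S}_i=\E v_i$ (otherwise it would already have stopped and inspected no box of $S$), so after ``inspecting'' $i$ it stops and selects $i$ itself rather than some earlier box; correspondingly $\Pols$ selects $i$ uninspected, and because $v_i$ is independent of everything observed so far, the conditional expectation of that payoff is $\E v_i$, exactly matching the modified instance. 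Getting this matching right, together with the bookkeeping around ties in Weitzman's stopping rule and the usual sign conventions (writing the optimum as $\E[\max_i\covered_i]$), is all the substance of the proof; combined with the upper bound it yields $u(\pi)\le\E[\max_i\covered^{S}_i]=u(\Pols)$ for every committing policy $\pi$ with reservation set $S$, which is the claim.
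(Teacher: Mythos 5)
Your proposal is correct, and it shares the paper's core reduction --- compare $\Pols$ against Weitzman's policy on the modified instance where $S$-boxes have cost zero and deterministic value $\E v_i$. The difference is in how you prove the upper bound $u(\pi)\le u(\Pols)$ for an arbitrary committing policy $\pi$ with reservation set $S$. The paper transforms $\pi$ into a required-inspection policy $H(\pi)$ on the modified instance (replacing ``select $i\in S$ uninspected'' by the free pair ``inspect $i$ then select $i$''), observes $u(H(\pi))=u(\pi)$, and then invokes Weitzman's optimality on the modified instance --- no amortization lemma needed for this direction. Your primary route instead invokes \Cref{lm:new-amort} and uses the committing constraints to collapse $\tilde{\covered}_i$ pointwise to $\covered^S_i$ (the $\covered$-variable of the modified instance): $\tilde{\covered}_i=\E v_i=\covered^S_i$ on $S$ because those boxes are never inspected, and on $T$ the product $\allocsubi\tilde{\covered}_i$ is either $\allocsubi\covered_i=\allocsubi\covered^S_i$ or vanishes because $\allocsubi\le\inspecti$. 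Both arguments are sound; the paper's is slightly more economical since it avoids deploying the nonobligatory amortization machinery for what is really a required-inspection comparison, whereas yours makes the identification with $\covered^S_i$ explicit, which some readers may find more transparent. You do flag the $H(\pi)$ alternative parenthetically, so you clearly had both in view. For the matching bound $u(\Pols)=\E[\max_i\covered^S_i]$, both proofs make the same key observation --- Weitzman's policy on the modified instance, upon inspecting a box $i\in S$, finds $v_i=\E v_i=\strike^S_i$ and stops immediately --- and your extra care about ties and about using independence to equate the conditional payoff with $\E v_i$ is correct, if a bit more verbose than the paper's treatment.
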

\begin{proof}
  Define a modified set of boxes as in \Cref{def:P_S}.
  Observe that in this modified problem instance,
  for any box $i \in S$ we have $\sigma_i = \E v_i$
  since $c_i=0$. Since the value of box $i$ is deterministically
  equal to $\E v_i$, whenever Weitzman's policy inspects
  box $i$ it finds that $v_i \ge \sigma_i$ and hence it
  immediately selects $i$. Thus, every execution path
  of Weitzman's policy on the modified set of boxes
  can be represented by a sequence of operations, each
  of which is either inspecting a box in $T$, selecting a
  box in $T$, or inspecting-and-immediately-selecting
  a box in $S$. Policy $\Pols$ duplicates each of these
  three types of operations and receives the same cost
  or expected benefit whenever it performs one of them,
  hence the expected utility of running Weitzman's optimal
  policy on the modified problem instance equals the
  expected utility of running $\Pols$ on the
  original instance.

  We must now show that no other committing policy with
  reservation set $S$ can attain a higher expected utility.
  This is quite easy to do, using the fact that Weitzman's
  policy is optimal for the modified instance.
  If $\pi$ is any committing policy with reservation set $S$,
  there is a corresponding policy $H(\pi)$ for the modified
  set of boxes that operates as follows: when $\pi$
  inspects or selects a box in $T$, $H(\pi)$ performs the
  same operation. When $\pi$ selects a box in $S$, $H(\pi)$
  inspects and immediately selects that box. (There is no need
  to define the behavior of $H(\pi)$ when $\pi$ inspects a box
  in $S$ since that event never happens.)
  The utility of running $H(\pi)$ on the modified set of boxes
  is the same as the utility of running $\pi$ on the original
  set of boxes, since the extra inspection operations that
  $H(\pi)$ performs on elements of $S$ have zero cost.
  Since the utility of running Weitzman's policy on the
  modified set of boxes is an upper bound on the utility
  of running $H(\pi)$, it follows that the utility of
  running $\Pols$ is an upper bound on the utility of
  running $\pi$, as claimed.
\end{proof}

\begin{lemma}[\Cref{lm:poli} restated]
  The optimal committing policy always belongs to the
  set $\{\W,\Pol{1},\Pol{2},\ldots,\Pol{n}\}$.
\end{lemma}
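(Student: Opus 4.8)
The plan is to show that among all the policies $\Pols$ for $S \subseteq [n]$, none with $|S| \ge 2$ can strictly beat every policy in the restricted set $\{\W, \Pol{1}, \ldots, \Pol{n}\}$; since \Cref{lm:pols} already tells us the optimal committing policy is some $\Pols$, this suffices. The key observation is that the modified instance defining $\Pols$ (where each box $i \in S$ is replaced by a deterministic box worth $\E v_i$ at zero inspection cost) behaves in a very structured way under Weitzman's algorithm: every box in $S$ has reservation value exactly $\E v_i$, and once Weitzman's policy inspects such a box it immediately selects it. Hence $\Pols$ never really ``uses'' more than one element of $S$ — it stops the first time it reaches a box in $S$ whose deterministic value $\E v_i$ exceeds the reservation values of all remaining unopened boxes, or earlier if it selects an opened box in $T$.

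Concretely, I would argue as follows. Fix $S$ with $|S| \ge 2$ and let $i^* = \argmax_{i \in S} \E v_i$. I claim $u(\Pols) \le u(\Pol{\{i^*\}})$. Consider the modified instances $\M_S$ and $\M_{\{i^*\}}$ used to define the two policies. In $\M_S$, every box $i \in S$ has reservation value $\E v_i \le \E v_{i^*}$. Running Weitzman's policy on $\M_S$, among the boxes of $S$ the one with the highest reservation value is $i^*$, so $i^*$ is inspected (and hence selected) before any other box of $S$ is ever reached. Therefore the execution of Weitzman on $\M_S$ never touches a box in $S \setminus \{i^*\}$ at all: it is identical to the execution of Weitzman on the instance where boxes in $S \setminus \{i^*\}$ are simply deleted. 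But deleting boxes can only decrease the value of Weitzman's optimal policy (by \Cref{cr:w_utility}, the value is $\E[\max_i \covered_i]$ over the remaining boxes, which is monotone in the box set). So $u(\Pols) = $ (Weitzman value on $\M_S$) $= $ (Weitzman value on $\M_S$ with $S \setminus \{i^*\}$ deleted) $\le$ (Weitzman value on $\M_{\{i^*\}}$) $= u(\Pol{\{i^*\}})$, where the middle equality holds because the two execution paths coincide and the inequality holds because $\M_{\{i^*\}}$ has those boxes present (with their original distributions, contributing $\covered_i \ge 0$).

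One subtlety to get right: when we ``delete'' the boxes $S \setminus \{i^*\}$ from $\M_S$ versus ``restore'' them with their original distributions in $\M_{\{i^*\}}$, we must make sure the reservation value of $i^*$ and the $T$-boxes are the same in both modified instances — which they are, since those boxes are modified/unmodified identically in $\M_S$ and $\M_{\{i^*\}}$, and a box's reservation value depends only on its own distribution and cost. Thus Weitzman's execution on $\M_{\{i^*\}}$, restricted to the event that it never inspects any box of $S \setminus \{i^*\}$, is literally the same random process as on $\M_S$; and since $\E[\max]$ over a larger collection of independent nonnegative variables only grows, the inequality follows cleanly. I expect the main (minor) obstacle to be stating the ``coupling of execution paths'' argument carefully enough that the monotonicity step is unambiguous; everything else is routine given \Cref{lm:pols} and \Cref{cr:w_utility}.

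\begin{proof}
By \Cref{lm:pols}, the optimal committing policy is $\Pols$ for some $S \subseteq [n]$.
If $|S| \le 1$ then $\Pols \in \{\W, \Pol{1}, \ldots, \Pol{n}\}$ and we are done, so suppose
$|S| \ge 2$ and let $i^* \in \argmax_{i \in S} \E v_i$. We show $u(\Pols) \le u(\Pol{\{i^*\}})$.

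Let $\M_S$ denote the modified set of boxes used in \Cref{def:P_S} to define $\Pols$,
and likewise $\M_{\{i^*\}}$ for $\Pol{\{i^*\}}$. In $\M_S$, each box $i \in S$ has inspection
cost $0$ and value deterministically $\E v_i$, so its reservation value is $\E v_i$;
in particular box $i^*$ has the largest reservation value among all boxes of $S$. Hence,
when Weitzman's policy runs on $\M_S$, the first box of $S$ it inspects is $i^*$, and upon
inspecting $i^*$ it finds $v_{i^*} = \E v_{i^*} = \strike_{i^*}$ and immediately selects it.
Consequently the execution of Weitzman's policy on $\M_S$ never inspects any box of
$S \setminus \{i^*\}$; its execution path is identical, as a random process, to that of
Weitzman's policy on the instance $\M_S'$ obtained from $\M_S$ by deleting the boxes in
$S \setminus \{i^*\}$.

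Now observe that $\M_S'$ and $\M_{\{i^*\}}$ agree on every box they both contain: the boxes
of $T = [n] \setminus S$ are unmodified in both, and box $i^*$ is modified identically in
both (cost $0$, value $\E v_{i^*}$). The only difference is that $\M_{\{i^*\}}$ additionally
contains the boxes of $S \setminus \{i^*\}$ with their original distributions and costs.
By \Cref{cr:w_utility}, the expected utility of Weitzman's policy on any box set equals
$\E[\max_i \covered_i]$ over that box set, where each $\covered_i \ge 0$ and the variables
are mutually independent. Adding more boxes can only increase this maximum, so
\[
  u(\Pols) \;=\; \E\!\left[\max_{i \in \M_S} \covered_i\right]
          \;=\; \E\!\left[\max_{i \in \M_S'} \covered_i\right]
          \;\le\; \E\!\left[\max_{i \in \M_{\{i^*\}}} \covered_i\right]
          \;=\; u(\Pol{\{i^*\}}),
\]
where the second equality holds because the execution paths on $\M_S$ and $\M_S'$ coincide.
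Thus $\Pols$ does not beat $\Pol{\{i^*\}}$, and the optimal committing policy belongs to
$\{\W, \Pol{1}, \ldots, \Pol{n}\}$.
\end{proof}
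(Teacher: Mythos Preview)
Your proof is correct and rests on the same key observation as the paper's: when $|S|\ge 2$, only the box $i^* \in \argmax_{i\in S}\E v_i$ matters, and the rest of $S$ is redundant. The paper's execution is considerably shorter, though: it argues directly at the level of committing policies that selecting a closed box $j\in S$ with $\E v_j \le \E v_{i^*}$ is always weakly dominated by selecting closed box $i^*$, so any committing policy with reservation set $S$ is dominated by one with reservation set $\{i^*\}$, and hence (by \Cref{lm:pols}) by $\Pol{\{i^*\}}$. You instead route the comparison through the modified instances $\M_S,\M_S',\M_{\{i^*\}}$, the execution of Weitzman's policy on them, and \Cref{cr:w_utility}; this works but is more machinery than needed. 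One minor remark: the clause ``each $\covered_i \ge 0$'' is unnecessary (and not obviously guaranteed if some $\strike_i<0$); the monotonicity $\E[\max_{i\in A} X_i] \le \E[\max_{i\in B} X_i]$ for $A\subseteq B$ holds pointwise regardless of sign, so your inequality step goes through without that assumption.
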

\begin{proof}
  Suppose $S \subseteq [n]$ is any set of two or more
  elements, and consider any two distinct elements
  $i,j \in S$ with $\E v_i \ge \E v_j$. A
  committing policy with reservation set $S$
  can never open box $i$ or box $j$, and the
  operation of selecting closed box $j$ is
  always dominated by the operation of
  selecting closed box $i$. Hence, any
  committing policy with reservation set $S$
  is dominated by a committing policy with
  reservation set $\{i\}$. In particular,
  the optimal such policy, $\Poli$, has at
  least as much expected utility as $\Pols$.
\end{proof}

\begin{lemma}[\Cref{lm:compute-value} restated]
  For any $S \subseteq [n]$, the expected utility
  of policy $\Pols$ can be computed in time
  $\operatorname{poly}(n,s)$, where $s$ is
  the maximum number of support points in any
  of the distributions $F_i$.
\end{lemma}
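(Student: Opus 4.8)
The plan is to turn the computation of $u(\Pols)$ into the evaluation of $\E[\max_i \hat\covered_i]$ for a collection of independent discrete random variables whose supports are small, and then to observe that this expectation has a routine polynomial-time formula.

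First I would construct the modified instance of \Cref{def:P_S} explicitly. For each box $i \in T = [n]\setminus S$ the distribution $F_i$ and cost $c_i$ are unchanged, so I would compute its reservation value $\strike_i$ by solving $\E_{v_i\sim F_i}\big[(v_i-\strike_i)^+\big]=c_i$. The left-hand side is a continuous, non-increasing, piecewise-linear function of $\strike_i$ whose breakpoints are exactly the (at most $s$) support points of $F_i$; evaluating it at each breakpoint locates the linear piece containing the root, after which $\strike_i$ is obtained by solving one linear equation. For $i \in S$ the modified box has cost $0$ and value deterministically $\E v_i = \sum_x x\,\Prob[v_i=x]$, a finite sum, and its modified reservation value equals $\E v_i$ as well. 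All of this uses $\operatorname{poly}(n,s)$ arithmetic operations, and keeps every quantity rational when the support points of the $F_i$ are rational.

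Next I would read off, for each $i$, the distribution of the capped value in the modified instance: set $\hat\covered_i = \min\{v_i,\strike_i\}$ for $i\in T$ and $\hat\covered_i \equiv \E v_i$ for $i\in S$. For $i\in T$ this random variable is supported on $\{x\in\operatorname{supp}(F_i): x<\strike_i\}\cup\{\strike_i\}$, of size at most $s+1$, with $\Prob[\hat\covered_i=\strike_i]=\Prob[v_i\ge\strike_i]$ and $\Prob[\hat\covered_i=x]=\Prob[v_i=x]$ for $x<\strike_i$; for $i\in S$ the support has size $1$. By \Cref{lm:pols}, $u(\Pols)$ equals the expected utility of Weitzman's policy on the modified instance, which by \Cref{cr:w_utility} equals $\E[\max_i \hat\covered_i]$.

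Finally I would compute $\E[\max_i \hat\covered_i]$ directly. Let $V=\{t_1<t_2<\cdots<t_m\}$ be the union of the supports of the $\hat\covered_i$, so $m\le n(s+1)$, and let $t_0$ be any number strictly below $t_1$. Since the $\hat\covered_i$ are mutually independent, $G(t):=\Prob[\max_i\hat\covered_i\le t]=\prod_{i=1}^n \Prob[\hat\covered_i\le t]$, and each factor is computed in $O(s)$ time from the explicit description of $\hat\covered_i$. Then
\[
  \E\Big[\max_i \hat\covered_i\Big] \;=\; \sum_{k=1}^m t_k\big(G(t_k)-G(t_{k-1})\big),
\]
which is evaluated in $\operatorname{poly}(n,s)$ time; combining the three steps yields the claimed bound. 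There is no genuine obstacle here: the only points requiring a word of care are the exact solvability of the reservation-value equation (handled by its piecewise-linear structure) and, depending on the intended model of computation, the remark that all arithmetic stays within rationals of polynomially bounded bit-length when the inputs are rational.
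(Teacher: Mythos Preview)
Your argument is correct and follows essentially the same route as the paper: reduce via \Cref{lm:pols} and \Cref{cr:w_utility} to evaluating $\E[\max_i \hat\covered_i]$ on the modified instance, then exploit that this is the expectation of the maximum of independent discrete random variables with at most $n(s+1)$ total support points. The paper writes this expectation as the tail integral $\int_0^\infty\big(1-\prod_i G_i(t)\big)\,dt$ and observes the integrand is a step function, whereas you sum $t_k\big(G(t_k)-G(t_{k-1})\big)$ directly; these are equivalent, and your version adds useful detail (how to solve for $\strike_i$ via the piecewise-linear structure, and the remark on rational arithmetic) that the paper leaves implicit.
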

\begin{proof}
  Let us start with the case $S = \emptyset,
  \Pols = \W$. According to \Cref{cr:w_utility},
  the expected utility of Weitzman's optimal policy, $\W$,
  is equal to $\E[\max_i \covered_i]$.
  Let $G_i$ denote the cumulative distribution
  function of $\covered_i$, i.e.
  \[
    G_i(t) = \begin{cases}
      F_i(t) & \mbox{if } t < \sigma_i \\
      1 & \mbox{otherwise}.
    \end{cases}
  \]
  Then we have the formula
  \begin{align}
    \nonumber
    \E[\max_i \covered_i] &= \int_0^\infty
      \Pr(\max_i \covered_i > t) \, dt
    \nonumber
      = \int_0^{\infty} 1 - \Pr(\max_i \covered_i \le t) \, dt
      = \int_0^{\infty} \left( 1 - \prod_{i=1}^n G_i(t) \right) \, dt .
  \end{align}
  The integrand on the right side is a step function with
  at most $ns$ steps, since every discontinuity in the step
  function belongs to the union of the
  support sets of the distributions of $\kappa_1,\ldots,\kappa_n$.
  Hence the integral can be computed in time
  $\operatorname{poly}(n,s)$ by simply summing over the steps.

  Computing the expected utility of policy $\Pols$ in the
  general case of $S \subseteq [n]$ reduces to the special
  case $\Pols = \W$, because the expected utility of $\Pols$
  is equal to the expected utility of Weitzman's policy on
  a modified set of boxes, as was shown in the proof of
  \Cref{lm:pols}.
\end{proof}

\end{document}